\newif\ifdraft
\newcommand{\tung}[1]{#1}
\newcommand{\todo}[1]{\textcolor{red}{\bfseries\upshape #1}}
\newcommand{\tomath}[1]{\textproc{#1}}
\newcommand*\tovec[1]{\mathbf{#1}}
\newcommand{\N}{\mathbb{N}}
\newcommand{\R}{\mathbb{R}}
\newcommand{\Z}{\mathbb{Z}}
\newcommand{\Q}{\mathbb{Q}}
\newcommand{\nn}{\mathbf{n}}
\newcommand{\pp}{\mathbf{p}}
\newcommand{\qq}{\mathbf{q}}
\newcommand{\Ss}{\mathbf{s}}
\newcommand{\rr}{\mathbf{r}}
\newcommand{\xx}{\mathbf{x}}
\newcommand{\yy}{\mathbf{y}}
\newcommand{\zz}{\mathbf{z}}
\newcommand{\zero}{\mathbf{0}}
\newcommand{\Frame}{\operatorname{frame}}
\newcommand{\co}{\operatorname{\tung{conv}}}
\newcommand{\newton}{\operatorname{newton}}
\newcommand{\face}{\operatorname{face}}
\newcommand{\vx}{\operatorname{\tung{V}}}
\newcommand{\stropsat}{\psi}
\newcommand{\isvertex}{\varphi}
\newcommand{\isnegated}{\vartheta}
\newcommand{\generalstropsat}{\Psi}
\newcommand{\ispos}{\Theta}
\newcommand{\sign}{\operatorname{sign}}
\newcommand{\true}{\textproc{true}}
\newcommand{\false}{\textproc{false}}
\newcommand{\deq}{\mathrel{\dot=}}
\DeclarePairedDelimiter\abs{\lvert}{\rvert}%
\spnewtheorem{ulemma}[theorem]{Lemma}{\bfseries}{\upshape}
\begin{document}
        
\title{Subtropical Satisfiability} 



\author{Pascal Fontaine\inst{1},
  Mizuhito Ogawa\inst{2},
  Thomas~Sturm\inst{1,3},
  Xuan Tung Vu\inst{1,2}\thanks{The order of authors is strictly alphabetic.}}
\institute{
        University of Lorraine, CNRS, Inria, and LORIA, Nancy, France\\
        \email{\{Pascal.Fontaine,thomas.sturm\}@loria.fr}
        \and
        Japan Advanced Institute of Science and Technology\\
        \email{\{mizuhito,tungvx\}@jaist.ac.jp}
        \and
        MPI Informatics and Saarland University, Germany\\
        \email{sturm@mpi-inf.mpg.de}
}

\maketitle      


\begin{abstract}
  Quantifier-free nonlinear arithmetic (QF\_NRA) appears in many applications of
  satisfiability modulo theories solving (SMT). Accordingly, efficient reasoning
  for corresponding constraints in SMT theory solvers is highly relevant. We
  propose a new incomplete but efficient and terminating method to identify
  satisfiable instances. The method is derived from the subtropical method
  recently introduced in the context of symbolic computation for computing real
  zeros of single very large multivariate polynomials. Our method takes as input
  conjunctions of strict polynomial inequalities, which represent more than 40\%
  of the QF\_NRA section of the SMT-LIB library of benchmarks. The method takes
  an abstraction of polynomials as exponent vectors over the natural numbers
  tagged with the signs of the corresponding coefficients. It then uses, in
  turn, SMT to solve linear problems over the reals to heuristically find
  suitable points that translate back to satisfying points for the original
  problem. Systematic experiments on the SMT-LIB demonstrate that our method is
  not a sufficiently strong decision procedure by itself but a valuable
  heuristic to use within a portfolio of techniques.
\end{abstract}

\section{Introduction}
Satisfiability Modulo Theories (SMT) has been blooming in recent years, and many
applications rely on SMT solvers to check the satisfiability of numerous and
large formulas~\cite{Barrett14,Barrett19}. Many of those applications use
arithmetic. In fact, linear arithmetic has been one of the first theories
considered in SMT.

Several SMT solvers handle also non-linear arithmetic theories. To be precise,
some SMT solvers now support constraints of the form $p \bowtie 0$, where
$\bowtie\ \in \{=,\leq,<\}$ and $p$ is a polynomial over real or integer
variables. Various techniques are used to solve these constraints over reals,
e.g., cylindrical algebraic decomposition
(RAHD~\cite{Passmore09combineddecision,CombinedDecision}, Z3 4.3~\cite{nlsat}),
virtual substitution (SMT-RAT~\cite{smtrat}, Z3~3.1), interval constraint \pagebreak
propagation~\cite{benhamou:hal-00480814} (HySAT-II~\cite{isat},
dReal~\cite{dRealCADE13,dReal_FMCAD}, RSolver~\cite{rsolver},
RealPaver~\cite{Granvilliers06realpaver:an}, raSAT~\cite{raSAT-ijcar2016}),
CORDIC (CORD~\cite{cordic}), and linearization
(IC3-NRA-proves~\cite{Cimatti2017}). Bit-blasting
(
MiniSmt~\cite{Zankl:2010:SNR:1939141.1939168}) and linearization
(Barcelogic~\cite{Barcelogic08}) can be used for integers.

We present here an incomplete but efficient method to detect the satisfiability
of large conjunctions of constraints of the form $p > 0$ where $p$ is a
multivariate polynomial with strictly positive real variables. The method
quickly states that the conjunction is satisfiable, or quickly returns unknown.
Although seemingly restrictive, 40\% of the quantifier-free non-linear real
arithmetic (QF\_NRA) category of the SMT-LIB is easily reducible to the
considered fragment. Our method builds on a \emph{subtropical} technique that
has been found effective to find roots of very large polynomials stemming from
chemistry and systems biology~\cite{subtropical,ErramiEiswirth:15a}. Recall that
a univariate polynomial with a positive head coefficient diverges positively as
$x$ increases to infinity. Intuitively, the subtropical approach generalizes
this observation to the multivariate case and thus to higher dimensions.

In Sect.~\ref{SE:basic} we recall some basic definitions and facts. In
Sect.~\ref{SE:subtropical} we provide a short presentation of the original
method~\cite{subtropical} and give some new insights for its foundations. In
Sect.~\ref{SE:multiplepol}, we extend the method to multiple polynomial
constraints. We then show in Sect.~\ref{SE:gensol} that satisfiability modulo
linear theory is particularly adequate to check for applicability of the method.
In Sect.~\ref{SE:experiments}, we provide experimental evidence that the method
is suited as a heuristic to be used in combination with other, complete,
decision procedures for non-linear arithmetic in SMT. It turns out that our
method is quite fast at either detecting satisfiability or failing. In
particular, it finds solutions for problems where state-of-the-art non-linear
arithmetic SMT solvers time out. Finally, in Sect.~\ref{SE:conclusions}, we
summarize our contributions and results, and point at possible future research
directions.

\section{Basic Facts and Definitions}\label{SE:basic}

For $a \in \mathbb{R}$, a vector $\xx = (x_1, \dots, x_d)$ of variables, and
$\pp = (p_1, \dots, p_d) \in \mathbb{R}^d$ we use notations
$a^\pp = (a^{p_1}, \dots, a^{p_d})$ and
$\xx^\pp = (x_1^{p_1}, \dots, x_d^{p_d})$. The \emph{frame} $F$ of a
multivariate polynomial $f\in\Z[x_1,\dots,x_d]$ in sparse distributive
representation
\begin{displaymath}
  f=\sum_{\pp \in F} f_\pp \tung{\xx}^\pp, \quad \tung{f_\pp \neq 0},\quad 
  F\subset\N^d,
\end{displaymath}
is uniquely determined, and written $\Frame(f)$.  It can be partitioned into a
positive and a negative frame, according to the sign of $f_\pp$:
\begin{displaymath}
  \Frame^+(f) = \{\,\pp\in\Frame(f)\mid f_\pp>0\,\},\ \ \
  \Frame^-(f) = \{\,\pp\in\Frame(f)\mid f_\pp<0\,\}.
\end{displaymath}

For $\pp$, $\qq\in\R^d$ we define
$\overline{\pp\qq}=\{\,\lambda\pp+(1-\lambda)\qq\in\R^n\mid\lambda\in[0,1]\,\}$.
Recall that $S\subseteq \mathbb{R}^d$ is \emph{convex} if
$\overline{\pp\qq}\subseteq S$ for all $\pp$, $\qq\in S$. Furthermore, given any
$S\subseteq\R^d$, the \emph{convex hull} $\co(S)\subseteq\R^d$ is the unique
inclusion-minimal convex set containing $S$.
\begin{figure}[t]
        \captionsetup[subfigure]{position=b}
        \centering
        \begin{subfigure}[b]{.68\columnwidth}
                \centering
                \subcaptionbox{The frame and the Newton polytope $P$ of 
                        $f$\label{fig:subtropical-single}}
                {\includegraphics[width=\linewidth]{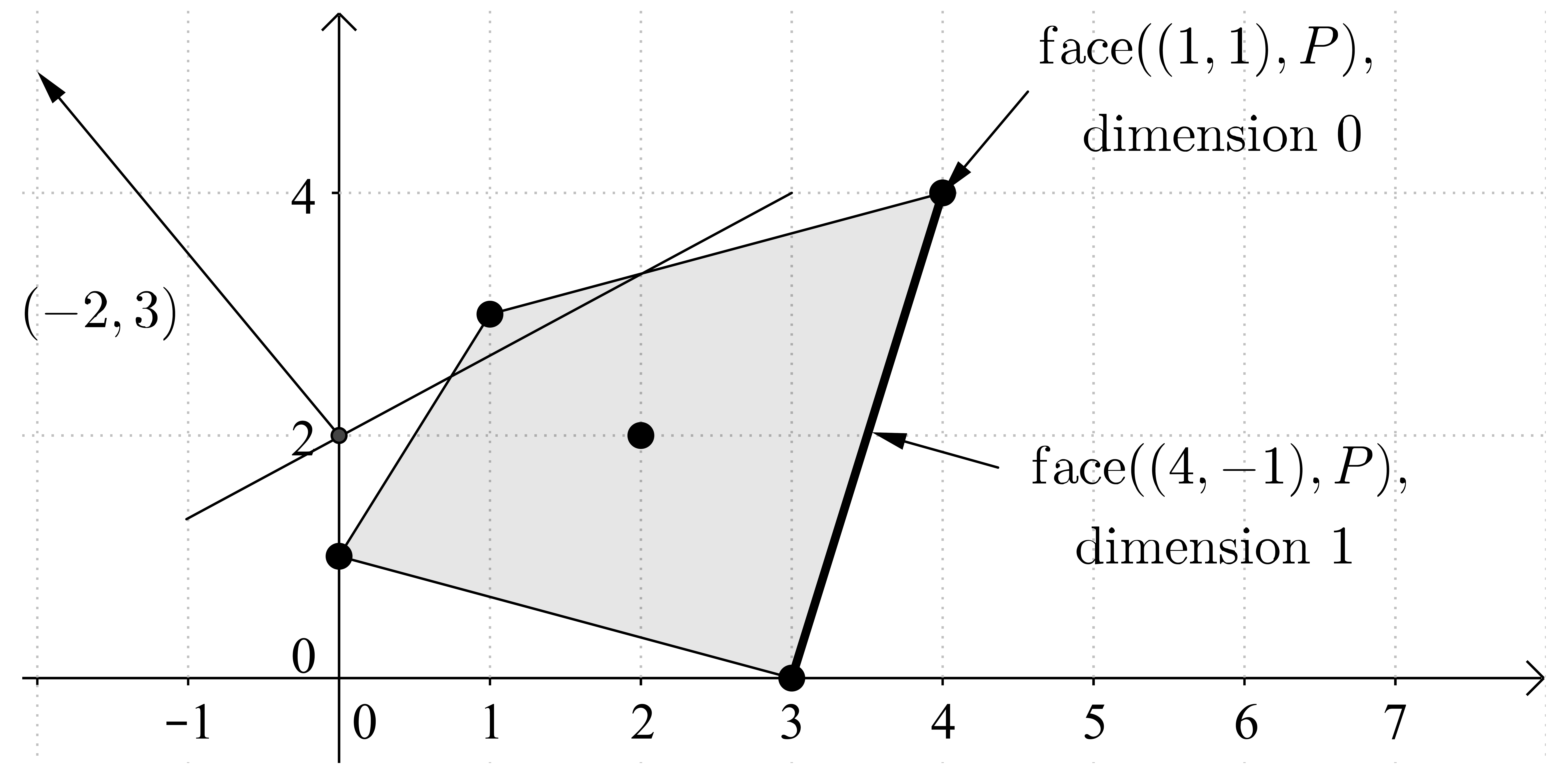}}
        \end{subfigure}%
        \begin{subfigure}[b]{.32\columnwidth}
                \centering
                \subcaptionbox{The variety of $f$ and the moment curve
                        $(a^{-2},a^3)$
                        \label{fig:subtropical-single-graph}}
                {\includegraphics[width=\linewidth]{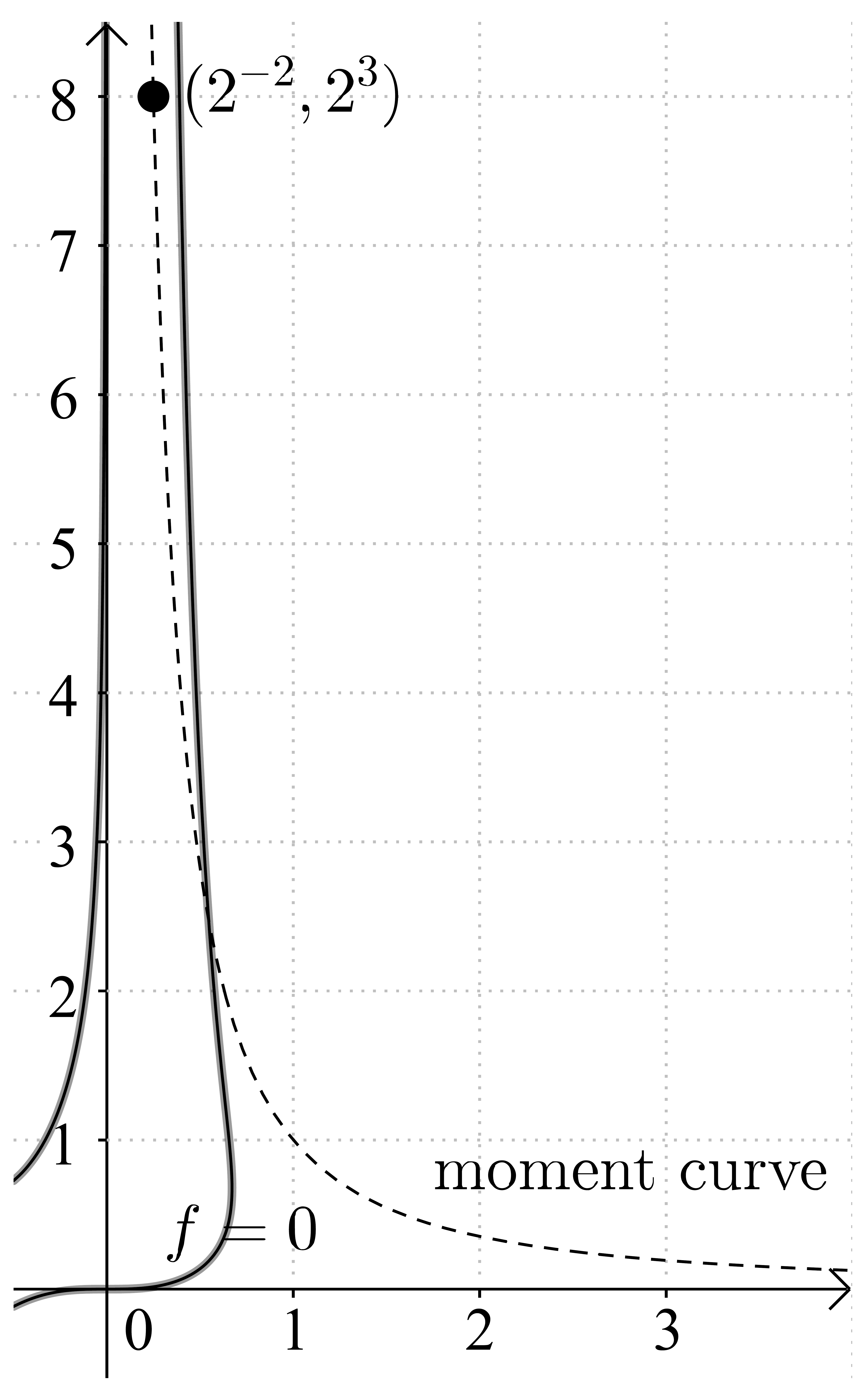}}
        \end{subfigure}
        \caption{An illustration of Example~\ref{eg:subtropical-single}, where $f=y +2xy^3 -3x^2y^2- x^3 - 
                4x^4y^4$}
        \label{fig:test}
\end{figure}
The \emph{Newton polytope} of a polynomial $f$ is the convex hull of its frame,
$\newton(f)=\co(\Frame(f))$. Fig.~\ref{fig:subtropical-single} illustrates the
Newton polytope of
\begin{displaymath}
y +2xy^3 -3x^2y^2 - x^3 - 4x^4y^4\in\Z[x,y],
\end{displaymath}
which is the convex hull of its frame
$\{(0, 1), (1, 3),(2, 2), (3, 0), (4, 4)\}\subset\N^2$. As a convex hull of a
finite set of points, the Newton polytope is bounded and thus indeed a
polytope~\cite{lineartheory}.

The \emph{face}~\cite{lineartheory} of a polytope $P\subseteq\R^d$ with respect 
to a
vector $\nn\in\R^d$ is
\begin{displaymath}
  \face(\nn, P) = \{\,\pp\in P \mid  \text{$\nn^T\pp \geq \nn^T\qq$ for all $\qq  \in P$}\, \}.
\end{displaymath}
Faces of dimension $0$ are called \emph{vertices}. We denote by $\vx(P)$ the set
of all vertices of $P$. We have $\pp\in\vx(P)$ if and only if there exists
$\nn\in\R^d$ such that $\nn^T\pp > \nn^T\qq$ for all $\qq\in P\setminus\{\pp\}$.
In Fig.\ref{fig:subtropical-single}, $(4,4)$ is a vertex of the Newton polytope
with respect to $(1,1)$.

It is easy to see that for finite $S\subset\R^d$ we have
\begin{equation}
  \vx(\co(S))\subseteq S\subseteq\co(S).\label{eq:1}
\end{equation}
The following lemma gives a
characterization of $\vx(\co(S))$:
\begin{lemma}
  \label{lem:vertex-hyperplane}
  Let $S\subset\R^d$ be finite, and let $\pp\in S$. The following are
  equivalent:
  \begin{enumerate}[(i)]
  \item $\pp$ is a vertex of $\co(S)$ with respect to $\nn$.
  \item There exists a hyperplane $H:\nn^T\xx + c = 0$ that strictly separates
    $\pp$ from $S \setminus \{\pp\}$, and the normal vector $\nn$ is directed
    from $H$ towards $\pp$.
  \end{enumerate}
\end{lemma}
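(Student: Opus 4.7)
The plan is to prove the two directions separately, exploiting two facts: first, that maxima of linear forms over a convex hull are attained at extreme points of the generating set, and second, that hyperplane parameters $c$ can be chosen to slide a bounding hyperplane into any desired position between $\pp$ and the finite set $S\setminus\{\pp\}$.

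For (i)$\Rightarrow$(ii) I would start from the characterization given just before the lemma: $\pp\in\vx(\co(S))$ with respect to $\nn$ means $\nn^T\pp>\nn^T\qq$ for every $\qq\in\co(S)\setminus\{\pp\}$. Specializing to $\qq\in S\setminus\{\pp\}\subseteq\co(S)\setminus\{\pp\}$ (via equation~(\ref{eq:1})), and using finiteness of $S$, the quantity $M=\max\{\,\nn^T\qq\mid\qq\in S\setminus\{\pp\}\,\}$ is well-defined and strictly less than $\nn^T\pp$. Setting $c=-\tfrac{1}{2}(\nn^T\pp+M)$ then yields $\nn^T\pp+c>0$ and $\nn^T\qq+c<0$ for every $\qq\in S\setminus\{\pp\}$, so the hyperplane $H:\nn^T\xx+c=0$ strictly separates $\pp$ from $S\setminus\{\pp\}$, with $\nn$ pointing towards the half-space containing $\pp$.

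For (ii)$\Rightarrow$(i) I would pick an arbitrary $\qq'\in\co(S)\setminus\{\pp\}$ and show $\nn^T\qq'<\nn^T\pp$. Writing $\qq'=\sum_{i}\lambda_i\qq_i$ as a convex combination of points $\qq_i\in S$ with $\lambda_i\ge 0$ and $\sum_i\lambda_i=1$, linearity gives
\begin{displaymath}
\nn^T\qq' \;=\; \sum_{i}\lambda_i\,\nn^T\qq_i \;\le\; \nn^T\pp,
\end{displaymath}
because (ii) implies $\nn^T\qq_i\le\nn^T\pp$ for every $i$ (strictly for $\qq_i\ne\pp$, with equality only if $\qq_i=\pp$). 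Equality would force every $\qq_i$ with $\lambda_i>0$ to equal $\pp$, whence $\qq'=\pp$, contradicting the choice of $\qq'$. Therefore the inequality is strict, and $\pp\in\vx(\co(S))$ with respect to $\nn$.

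No step strikes me as genuinely difficult: the chief subtlety is the implicit appeal to finiteness of $S$ in the forward direction (to ensure the separating gap is of positive width, so that a single $c$ works uniformly) and the routine but essential observation in the backward direction that a convex combination equals a vertex only when it is trivially supported at that vertex. Both follow immediately from elementary convexity, and the lemma's content is essentially a restatement of the supporting-hyperplane characterization of vertices specialized to finite point configurations.
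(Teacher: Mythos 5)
Your proof is correct and follows essentially the same route as the paper's: for (i)$\Rightarrow$(ii) both arguments exploit finiteness of $S$ to find a gap between $\nn^T\pp$ and the maximal value of $\nn^T$ over $S\setminus\{\pp\}$ and then place $c$ in that gap, and for (ii)$\Rightarrow$(i) both write a point of the convex hull as a convex combination of elements of $S$ and observe that strict separation forces the combination's $\nn^T$-value strictly below $\nn^T\pp$ unless it is trivially supported at $\pp$. The only cosmetic difference is that you choose the midpoint $c=-\tfrac12(\nn^T\pp+M)$ explicitly and avoid the paper's case split between $\qq\in S\setminus\{\pp\}$ and $\qq\in\co(S)\setminus S$ by treating an arbitrary $\qq'\in\co(S)\setminus\{\pp\}$ uniformly.
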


\begin{proof}
  Assume (i). Then there exists $\nn\in\R^d$ such that $\nn^T\pp > \nn^T\qq$ for
  all $\qq\in S\setminus\{\pp\}\subseteq\co(S)\setminus\{\pp\}$. Choose
  $\qq_0\in S\setminus\{\pp\}$ such that $\nn^T\qq_0$ is maximal, and choose $c$
  such that $\nn^T\pp>-c>\nn^T\qq_0$. Then $\nn^T\pp+c>0$ and
  $\nn^T\qq+c\leq\nn^T\qq_0+c<0$ for all $\qq\in S\setminus\{\pp\}$. Hence
  $H:\nn^T\pp+c=0$ is the desired hyperplane.
  
  Assume (ii). It follows that $\nn^T\pp+c>0>\nn^T\qq+c$ for all
  $\qq\in S\setminus\{\pp\}$. If $\qq\in S\setminus\{\pp\}$, then
  $\nn^T\pp>\nn^T\qq$. If, in contrast,
  $\qq\in(\co(S)\setminus S)\setminus\{\pp\}=\co(S)\setminus S$, then
  $\qq=\sum_{\Ss\in S} t_\Ss \Ss$, where $t_\Ss\in[0,1]$,
  $\sum_{\Ss\in S}t_\Ss=1$, and at least two $t_\Ss$ are greater than $0$. It
  follows that
  \begin{displaymath}
    \nn^T\qq=\nn^T\sum_{\Ss\in S} t_\Ss \Ss<\nn^T\pp\sum_{\Ss\in S}
    t_\Ss=\nn^T\pp.\quad\qed
  \end{displaymath}
\end{proof}

\newcommand{\vbasis}{vertex cluster\xspace}

Let $S_1$, \dots,~$S_m \subseteq \mathbb{R}^d$, and let $\nn\in\R^d$. If there
exist $\pp_1\in S_1$, \dots, $\pp_n\in S_m$ such that each $\pp_i$ is a vertex
of $\co(S_i)$ with respect to $\nn$, then the (unique) \emph{\vbasis} of
$\{S_i\}_{i\in\{1,\dots,m\}}$ with respect to $\nn$ is defined as
$(\pp_1,\dots,\pp_m)$.

\section{Subtropical Real Root Finding Revisited}\label{SE:subtropical}
This section improves on the \tung{original method} described in
\cite{subtropical}. It furthermore lays some theoretical foundations to better
understand the limitations of the heuristic approach.
\ifdraft
\todo{
\begin{itemize}
\item no special treatment of $0$ necessary
\item $0$ always in the NP
\item Lexicographic points always in the NP
\item practically not relevant [because this is NP-hard?]
\item More generally: preorders [Completed]
\end{itemize}
} \fi
The method finds real zeros with all positive coordinates of a multivariate
polynomial $f$ in three steps:
\begin{enumerate}
\item Evaluate $f(1,\dots,1)$. If this is $0$, we are done. If this is greater
  than $0$, then consider $-f$ instead of $f$. We may now assume that we have
  found $f(1,\dots,1)<0$.
\item Find $\pp$ with all positive coordinates such that $f(\pp)>0$.
\item Use the Intermediate Value Theorem (a continuous function with positive
  and negative values has a zero) to construct a root of $f$ on the line segment
  $\overline{\mathbf{1}\pp}$.
\end{enumerate}
We focus here on Step 2. Our technique builds on \cite[Lemma 4]{subtropical},
which we are going to restate now in a slightly generalized form. While the
original lemma required that $\pp\in\Frame(f)\setminus\{\zero\}$, inspection of the
proof shows that this limitation is not necessary:
\begin{lemma}
  \label{lem:base-stropsat}
  Let $f$ be a polynomial, and let $\pp \in \Frame(f)$ be a vertex of
  $\newton(f)$ with respect to $\nn \in \mathbb{R}^d$. Then there exists
  $a_0 \in \mathbb{R}^+$ such that for all $a \in \mathbb{R}^+$ with
  $a \ge a_0$ the following holds:
  \begin{enumerate}
  \item
    $\abs{f_\pp\ a^{\nn^T\pp}} >
    \abs{\sum_{\qq \in \Frame(f)\setminus
        \{\pp\}}f_\qq\ a^{\nn^T\qq}}$,
  \item $\sign(f(a^\nn)) = 
  \sign(f_\pp)$.\qed
  \end{enumerate}
\end{lemma}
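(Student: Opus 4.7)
The plan is to exploit the defining property of the vertex $\pp$: by Lemma~\ref{lem:vertex-hyperplane} (or directly from the definition of vertex), we have $\nn^T\pp > \nn^T\qq$ for every $\qq \in \newton(f) \setminus \{\pp\}$, and in particular for every $\qq \in \Frame(f) \setminus \{\pp\}$. Since $\Frame(f)$ is finite, the positive quantity
\begin{displaymath}
  \delta = \min_{\qq \in \Frame(f)\setminus\{\pp\}} \bigl(\nn^T\pp - \nn^T\qq\bigr) > 0
\end{displaymath}
is well defined and strictly positive. This is the only place where the vertex hypothesis enters; the rest is analytic bookkeeping.

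For part (1), I would divide the inequality through by $a^{\nn^T\pp}$ (which is legal since $a>0$), reducing the claim to showing $|f_\pp| > \bigl|\sum_{\qq} f_\qq \, a^{\nn^T\qq - \nn^T\pp}\bigr|$ for sufficiently large $a$. By the triangle inequality and the definition of $\delta$,
\begin{displaymath}
  \Bigl|\sum_{\qq \in \Frame(f)\setminus\{\pp\}} f_\qq\, a^{\nn^T\qq-\nn^T\pp}\Bigr|
  \;\le\; \Bigl(\sum_{\qq \in \Frame(f)\setminus\{\pp\}} |f_\qq|\Bigr)\, a^{-\delta}.
\end{displaymath}
Since $|f_\pp| > 0$, we can solve $\bigl(\sum_\qq |f_\qq|\bigr)\,a^{-\delta} < |f_\pp|$ explicitly for $a$, obtaining an admissible threshold
\begin{displaymath}
  a_0 \;=\; \max\Bigl\{1,\ \bigl(\tfrac{1}{|f_\pp|}\sum_{\qq}|f_\qq|\bigr)^{1/\delta}\Bigr\}.
\end{displaymath}
Any $a \ge a_0$ then satisfies (1).

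Part (2) is a short corollary. Writing $f(a^\nn) = f_\pp\, a^{\nn^T\pp} + R(a)$ where $R(a) = \sum_{\qq \neq \pp} f_\qq\, a^{\nn^T\qq}$, part (1) states $|f_\pp\, a^{\nn^T\pp}| > |R(a)|$, so the dominant term determines the sign. Because $a^{\nn^T\pp} > 0$, the sign of the dominant term equals $\sign(f_\pp)$, yielding $\sign(f(a^\nn)) = \sign(f_\pp)$.

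I do not see a real obstacle here: the only conceptual ingredient is the strict inequality $\nn^T\pp > \nn^T\qq$ guaranteed by $\pp$ being a vertex; everything else is the standard fact that the term with the largest exponent dominates a finite sum of powers of $a$ as $a \to \infty$. The mild subtlety worth flagging in the write-up is that the exponents $\nn^T\qq$ need not be positive (since $\nn$ and $\qq$ are real/natural but arbitrary in sign combination), so one should argue via the ratio $a^{\nn^T\qq - \nn^T\pp}$ rather than try to bound $a^{\nn^T\qq}$ directly.
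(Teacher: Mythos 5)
Your proof is sound and self-contained, and you have isolated exactly the two points that matter. For context: the paper does not give a proof of this lemma at all; it defers to the cited reference~\cite{subtropical} (Lemma~4 there) and only remarks that the proof there goes through without the original restriction $\pp \neq \zero$. So there is no in-paper argument to compare against, but your approach is the expected one: the vertex hypothesis yields a uniform positive gap $\delta = \min_{\qq\in\Frame(f)\setminus\{\pp\}}(\nn^T\pp - \nn^T\qq) > 0$, and after dividing by $a^{\nn^T\pp}$ (necessary, as you flag, because the bare exponents $\nn^T\qq$ can have arbitrary sign) the tail is bounded by $\bigl(\sum_{\qq\neq\pp}|f_\qq|\bigr)a^{-\delta}$, which tends to $0$ as $a\to\infty$, while the leading term stays fixed at $|f_\pp|>0$. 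Deriving part~(2) from part~(1) is then just the observation that a summand strictly dominating the rest in magnitude determines the sign of the sum, combined with $a^{\nn^T\pp}>0$.

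One small slip in the bookkeeping: with the threshold $a_0 = \max\bigl\{1,\ \bigl(\tfrac{1}{|f_\pp|}\sum_{\qq\neq\pp}|f_\qq|\bigr)^{1/\delta}\bigr\}$, the estimates give only $\bigl(\sum_{\qq\neq\pp}|f_\qq|\bigr)\, a_0^{-\delta}\le|f_\pp|$ at $a=a_0$, not a strict inequality, so item~(1) could fail precisely at $a=a_0$. Pad it (e.g.\ take $a_0+1$, or put a factor $2$ in the numerator under the root), or state the conclusion for $a>a_0$ and relabel. Also, if $\Frame(f)=\{\pp\}$ the minimum defining $\delta$ is over the empty set; handle that degenerate case separately (it is trivial). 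Both are cosmetic.
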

In order to find a point with all positive coordinates where $f > 0$, the
original method iteratively examines each
$\pp \in \Frame^+(f) \setminus \{\tovec{0}\}$ to check if it is a vertex
of $\newton(f)$ with respect to some $\nn \in \mathbb{R}^d$. In the
positive case, Lemma~\ref{lem:base-stropsat} guarantees for large enough
$a \in \mathbb{R}^+$ that
$\sign(f(a^{\nn})) = \sign(f_\pp)=1$, in
other words, $f(a^{\nn}) > 0$.

\begin{example}
\label{eg:subtropical-single}
Consider $f = y +2xy^3 -3x^2y^2 - x^3 - 4x^4y^4$.
Figure~\ref{fig:subtropical-single} illustrates the frame and the Newton
polytope of $f$, of which $(1,3)$ is a vertex with respect to $(-2, 3)$.
Lemma~\ref{lem:base-stropsat} ensures that $f(a^{-2}, a^{3})$ is strictly
positive for sufficiently large positive $a$. For example,
$f(2^{-2}, 2^{3}) = \frac{51193}{256}$.
Figure~\ref{fig:subtropical-single-graph} shows how the moment curve
$(a^{-2}, a^{3})$ with $a \geq 2$ will not leave the sign invariant region of
$f$ that contains $(2^{-2}, 2^{3})$.
\end{example}
An exponent vector $\zero\in\Frame(f)$ corresponds to an absolute summand
$f_\zero$ in $f$. Its above-mentioned explicit exclusion in \cite[Lemma
4]{subtropical} originated from the false intuition that one cannot achieve
$\sign(f(a^\nn))=\sign(f_\zero)$ because the monomial $f_\zero$ is invariant
under the choice of $a$. However, inclusion of $\zero$ can yield a normal vector
$\nn$ which renders all other monomials small enough for $f_\zero$ to dominate.

Given a finite set $S \subset \mathbb{R}^d$ and a point $\pp \in S$, the
original method uses linear programming to determine if $\pp$ is a vertex
of $\co(S)$ w.r.t. some vector $\nn \in \mathbb{R}^d$.
Indeed, from Lemma~\ref{lem:vertex-hyperplane}, the problem can be reduced to
finding a hyperplane $H: \nn^T\xx+c = 0$ that strictly separates
$\pp$ from $S \setminus \{\pp\}$ with the normal vector $\nn$
pointing from $H$ to $\pp$. This is equivalent to solving the following
linear problem with $d + 1$ real variables $\nn$ and $c$:
\begin{equation}
\label{eq:isvertex}
\isvertex(\pp, S, \nn, c) \deq  
\nn^T\pp+c > 0 \land \bigwedge_{\qq \in S \setminus 
\{\pp\}} \nn^T\qq+c < 0.
\end{equation}

Notice that with the occurrence of a nonzero absolute summand the corresponding
point $\tovec{0}$ is generally a vertex of the Newton polytope with respect to
$\tovec{-1} = (-1, \dots, -1)$.
This raises the question whether there are other special points that are
certainly vertices of the Newton polytope. In fact, $\tovec{0}$ is a
lexicographic minimum in $\Frame(f)$, and it is not hard to see that minima and
maxima with respect to lexicographic orderings are generally vertices of the
Newton polytope.

We are now going to generalize that observation. A \emph{monotonic total preorder}
${\preceq}\subseteq\Z^d\times\Z^d$ is defined as follows:
\begin{enumerate}[(i)]
\item $\xx\preceq\xx$ (reflexivity)
\item $\xx\preceq\yy\land\yy\preceq\zz\longrightarrow\xx\preceq\zz$
  (transitivity)
\item $\xx\preceq\yy\longrightarrow\xx+\zz\preceq\yy+\zz$ (monotonicity)
\item $\xx\preceq\yy\lor\yy\preceq\xx$ (totality).
\end{enumerate}
The difference to a total order is the missing anti-symmetry. As an example in
$\Z^2$ consider $(x_1,x_2)\preceq(y_1,y_2)$ if and only if
$x_1+x_2\leq y_1+y_2$.
Then $-2\preceq 2$ and $2\preceq -2$ but $-2\neq 2$. Our
definition of $\preceq$ on the extended domain $\Z^d$ guarantees a cancellation
law $\xx+\zz\preceq \yy+\zz\longrightarrow \xx\preceq \yy$ also on $\N^d$. The
following lemma follows by induction using monotonicity and cancellation:

\begin{lemma}
  For $n\in\N\setminus\{0\}$ denote as usual the $n$-fold addition of $\xx$ as
  $n\odot\xx$. Then
  $\xx\preceq \yy\longleftrightarrow n\odot\xx\preceq n\odot\yy$.\qed
\end{lemma}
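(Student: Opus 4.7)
The plan is to prove the two implications separately, with the forward direction by a straightforward induction on $n$ and the backward direction reduced to the forward one via totality together with the cancellation law mentioned just before the lemma.

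For the forward direction, I would induct on $n$. The base case $n=1$ is immediate. For the step, assume $n\odot\xx\preceq n\odot\yy$ given $\xx\preceq\yy$. Applying monotonicity to $\xx\preceq\yy$ with $\zz = n\odot\xx$ yields $(n{+}1)\odot\xx \preceq \yy + n\odot\xx$, while applying monotonicity to the induction hypothesis with $\zz = \yy$ yields $\yy + n\odot\xx \preceq (n{+}1)\odot\yy$. Transitivity then gives $(n{+}1)\odot\xx \preceq (n{+}1)\odot\yy$.

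For the backward direction, assume $n\odot\xx\preceq n\odot\yy$. By totality, either $\xx\preceq\yy$ (in which case we are done) or $\yy\preceq\xx$. In the second case, the forward direction, already established, gives $(n{-}1)\odot\yy\preceq(n{-}1)\odot\xx$; adding $\xx$ to both sides via monotonicity produces $\xx+(n{-}1)\odot\yy\preceq n\odot\xx$. Chaining with the hypothesis $n\odot\xx\preceq n\odot\yy = \yy+(n{-}1)\odot\yy$ by transitivity gives $\xx+(n{-}1)\odot\yy\preceq\yy+(n{-}1)\odot\yy$, so the cancellation law yields $\xx\preceq\yy$.

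The only subtlety, and thus the main thing to be careful about, is the backward direction: without antisymmetry one cannot hope to invert the induction directly, so the cancellation law on $\N^d$ (whose validity is inherited from the extension to $\Z^d$ noted just above the lemma) is essential. Once the right quantity, namely $(n{-}1)\odot\yy$, is added to isolate $\xx$ versus $\yy$, the argument reduces to a single application of cancellation.
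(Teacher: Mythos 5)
Your proof is correct and fills in exactly the route the paper hints at with its one-line remark that the lemma ``follows by induction using monotonicity and cancellation'': the forward implication is your induction with monotonicity and transitivity, and the backward implication is your appeal to cancellation. You go slightly beyond the paper's sketch by making explicit that totality is also needed to set up the backward direction, which is a genuine (if small) gap in the paper's phrasing; the only nit is that for $n=1$ your backward argument implicitly invokes $(n-1)\odot\yy$, which is not defined for $n-1=0$, so that case should be noted as immediate.
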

%
%
%
Any monotonic preorder $\preceq$ on $\Z^d$ can be extended to $\Q^d$: Using a
suitable principle denominator $n\in\N\setminus\{0\}$ define
\begin{displaymath}
  \left(\frac{x_1}{n},\dots,\frac{x_d}{n}\right)\preceq
  \left(\frac{y_1}{n},\dots,\frac{y_d}{n}\right)
  \quad\text{if and only if}\quad
  (x_1,\dots,x_d)\preceq
  (y_1,\dots,y_d).
\end{displaymath}
This is well-defined.



Given $\xx\preceq\yy$ we have either $\yy\npreceq\xx$ or $\yy\preceq\xx$. In the
former case we say that $\xx$ and $\yy$ are \emph{strictly} preordered and write
$\xx\prec\yy$. In the latter case they are \emph{not} strictly preordered, i.e.,
$\xx\nprec\yy$ although we might have $\xx\neq\yy$. In particular, reflexivity
yields $\xx\preceq\xx$ and hence certainly $\xx\nprec\xx$.

\begin{example}
  Lexicographic orders are monotonic total orders and thus monotonic total
  preorders. Hence our notion covers our discussion of the absolute summand
  above. Here are some further examples: For $i\in\{1,\dots,d\}$ we define
  $\xx\preceq_i\yy$ if and only if $\pi_i(\xx)\leq\pi_i(\yy)$, where $\pi_i$
  denotes the $i$-th projection. Similarly, $\xx\succeq_{i}\yy$ if and only if
  $\pi_i(\xx)\geq\pi_i(\yy)$. Next, $\xx\preceq_\Sigma\yy$ if and only if
  $\sum_ix_i\leq\sum_iy_i$. Our last example is going to be
  instrumental with the proof of the next theorem: Fix $\nn \in \mathbb{R}^d$,
  and define for $\pp$, $\pp'\in\Z^d$ that $\pp \preceq_\nn \tovec{p'}$ if and
  only if $\nn^T\pp\leq \nn^T\tovec{p'}$.
\end{example}

\begin{theorem}\label{th:admissible}
  Let $f\in\Z[x_1,\dots,x_d]$, and let $\pp\in\Frame(f)$. Then the following are
  equivalent:
  \begin{enumerate}[(i)]
  \item $\pp\in\vx(\newton(f))$
  \item There exists a  monotonic total preorder $\preceq$ on $\Z^d$ such that 
    \begin{displaymath}
      \pp=\max\nolimits_\prec(\Frame(f)).
\end{displaymath}
  \end{enumerate}
\end{theorem}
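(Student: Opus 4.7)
The plan is to handle the two implications separately, leaning on the hyperplane characterization of vertices from Lemma~\ref{lem:vertex-hyperplane} on one side and on the monotonicity/cancellation machinery just developed for preorders on the other.

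For (i)$\Rightarrow$(ii) I would simply reuse the preorder $\preceq_\nn$ introduced in the last example before the theorem. If $\pp\in\vx(\newton(f))$, Lemma~\ref{lem:vertex-hyperplane} supplies $\nn\in\R^d$ with $\nn^T\pp>\nn^T\qq$ for all $\qq\in\Frame(f)\setminus\{\pp\}$. The relation $\xx\preceq_\nn\yy$ iff $\nn^T\xx\leq\nn^T\yy$ is visibly a monotonic total preorder on $\Z^d$ because $\nn^T(\cdot)$ is linear and $\leq$ is a total order on $\R$, and the strict hyperplane inequality translates directly into $\qq\prec_\nn\pp$ for every other frame point, so $\pp=\max_{\prec_\nn}(\Frame(f))$.

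For (ii)$\Rightarrow$(i) I would argue by contraposition. Assuming $\pp\in\Frame(f)$ is not a vertex of $\newton(f)$, the standard fact that a polytope equals the convex hull of its vertices gives $\co(\Frame(f))=\co(\Frame(f)\setminus\{\pp\})$, whence $\pp\in\co(\Frame(f)\setminus\{\pp\})$. Because $\Frame(f)\subset\N^d$, the corresponding convex-combination system has rational coefficients, so it admits a rational feasible solution; clearing a common denominator $N\in\N\setminus\{0\}$ yields weights $k_i\in\N$ with $\sum_i k_i=N$ and $N\pp=\sum_i k_i\qq_i$, where all $\qq_i\in\Frame(f)\setminus\{\pp\}$. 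Now take any preorder $\preceq$ witnessing (ii). Since $\qq_i\prec\pp$ strictly, the strict form of the preceding $n$-fold lemma yields $k_i\odot\qq_i\prec k_i\odot\pp$ whenever $k_i>0$. Iterated monotonicity (combined with cancellation to block the reverse direction) then lifts this to $\sum_i k_i\odot\qq_i\prec\sum_i k_i\odot\pp$. Both sides equal $N\pp$, producing the forbidden $N\pp\prec N\pp$ and contradicting irreflexivity of $\prec$.

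The hard part will be the last step, upgrading the strict single-term relations $\qq_i\prec\pp$ into a single strict relation on the sums. I would isolate a short auxiliary lemma stating that if $\xx_j\preceq\yy_j$ for all $j$ and $\xx_{j_0}\prec\yy_{j_0}$ for some index $j_0$, then $\sum_j\xx_j\prec\sum_j\yy_j$. The non-strict inequality follows by iterated monotonicity; the strictness follows by assuming the converse and using cancellation (which is available on $\Z^d$ by adding additive inverses) to peel off the equal summands, contradicting $\xx_{j_0}\prec\yy_{j_0}$. Once this auxiliary lemma is in place, the rest of direction (ii)$\Rightarrow$(i) is automatic, and direction (i)$\Rightarrow$(ii) is essentially a restatement of Lemma~\ref{lem:vertex-hyperplane}.
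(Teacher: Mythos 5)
Your proposal is correct and follows essentially the same approach as the paper: direction (i)$\Rightarrow$(ii) uses the preorder $\preceq_\nn$ induced by the vertex-defining normal, and direction (ii)$\Rightarrow$(i) derives the forbidden $\pp\prec\pp$ from a convex-combination representation of a non-vertex $\pp$. You add welcome rigor where the paper's proof is terse at ``using monotony we obtain'': you clear denominators to work with integer weights (respecting that $\preceq$ is only defined on $\Z^d$ and $\Q^d$, not $\R^d$), and you isolate the auxiliary lemma needed to lift pairwise $\prec$-relations to a single strict relation on the sums, which the paper applies without comment.
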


\begin{proof}
  Let $\pp$ be a vertex of $\newton(f)$ specifically with
  respect to $\nn$. By our definition of a vertex in Sect.~\ref{SE:basic}, $\pp$
  is the maximum of $\Frame(f)$ with respect to $\prec_\nn$.
  
  Let, vice versa, $\preceq$ be a monotonic total preorder on $\Z^d$, and let
  $\pp=\max_\prec(\Frame(f))$. Shortly denote $V=\vx(\newton(f))$, and assume
  for a contradiction that $\pp\notin V$. Since
  $\pp\in\Frame(f)\subseteq\newton(f)$, we have
  \begin{displaymath}
    \pp = \sum_{\Ss \in V}t_\Ss \Ss,\quad\text{where}\quad
    t_\Ss \in [0, 1]\quad\text{and}\quad
    \sum_{\Ss \in V}t_\Ss=1.
  \end{displaymath}
  According to (\ref{eq:1}) in Sect.~\ref{SE:basic} we know that
  $V\subseteq\Frame(f)\subseteq\newton(f)$. It follows that $\Ss\prec\pp$ for
  all $\Ss\in V$, and using monotony we obtain
  \begin{displaymath}
    \pp\prec\sum_{\Ss \in V}t_\Ss \pp=\left(\sum_{\Ss \in V}t_\Ss\right)\pp=\pp.
  \end{displaymath}
  On the other hand, we know that generally $\pp\nprec\pp$, a contradiction.\qed
\end{proof}

In Fig.~\ref{fig:subtropical-single} we have
$(0,1)=\max_{\succeq_1}(\Frame(f))$, $(3,0)=\max_{\succeq_2}(\Frame(f))$, and
$(4,4)=\max_{\preceq_1}(\Frame(f))=\max_{\preceq_2}(\Frame(f))$. This shows
that, besides contributing to our theoretical understanding, the theorem can be
used to substantiate the efficient treatment of certain special cases in
combination with other methods for identifying vertices of the Newton polytope.



\ifdraft
Consider $f:\mathopen]0,\infty\mathclose[^d\to\R$ given by
$f\in\R[x_1,\dots,x_d]$, where
\begin{displaymath}
  f(x_1,\dots,x_d)=\sum_{(p_1,\dots,p_d) \in F} f_{(p_1,\dots,p_d)} x_1^{p_1}\dots x_d^{p_d}, \quad f_{(p_1,\dots,p_d)}\neq 0,\quad F\subset\N^d.
\end{displaymath}
Define $\hat{f}:\R^d\to\R$ by
\begin{align*}
  \hat{f}(x_1,\dots,x_d)
  &=\sum_{(p_1,\dots,p_d) \in F}f_{(p_1,\dots,p_d)}\exp(x_1)^{p_1}\cdots\exp(x_d)^{p_d}\\
  &=\sum_{(p_1,\dots,p_d)\in F} f_{(p_1,\dots,p_d)}\exp(p_1x_1+\dots+p_dx_d).
\end{align*}
A plot of the zeros of $\hat{f}$ in logarithmic coordinates is identical to a
plot of the variety of $f$ in regular coordinates; see
Fig.~\ref{fig:characterization}.

\begin{figure}[t]
  \centering
  \includegraphics[width=\linewidth]{characterization.pdf}
  \caption{\label{fig:characterization}}
\end{figure}

\begin{conjecture}
  Let $f\in\R[x_1,\dots,x_d]$, and let
  $(p_1,\dots,p_d)\in\mathopen]0,\infty\mathclose[^d$ such that
  $f(p_1,\dots,p_d)=\hat{f}(\ln p_1,\dots,\ln p_d)>0$. The following are
  equivalent:
  \begin{enumerate}[(i)]
  \item The method can discover that $f(p_1,\dots,p_d)>0$
  \item There is $(\sigma_1,\dots,\sigma_d)\in\{-1,1\}^d$ such that $\hat{f}(\ln p_1+\sigma_1\lambda
    ,\dots,\ln p_d+\sigma_d\lambda)>0$ for all $\lambda\geq0$.
  \item There is $(\sigma_1,\dots,\sigma_d)\in\{-1,1\}^d$ such that
    $f(e^{\sigma_1\lambda}p_1,\dots,e^{\sigma_d\lambda}p_d)>0$ for all $\lambda\geq0$.
  \end{enumerate}
\end{conjecture}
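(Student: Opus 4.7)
The equivalence of (ii) and (iii) is immediate from the definition: substituting $x_i = \ln p_i + \sigma_i \lambda$ into $\hat{f}(x_1, \ldots, x_d) = f(e^{x_1}, \ldots, e^{x_d})$ gives $e^{x_i} = p_i e^{\sigma_i \lambda}$, so the two expressions in (ii) and (iii) are literally equal. The substantive content is therefore (i) $\Leftrightarrow$ (ii), which relates the combinatorial success of the subtropical method to the existence of an orthant-directed ray of positivity emanating from $\ln\pp$ in log-coordinates.

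The key object for the analysis is the one-variable restriction
\[
g_\sigma(\lambda) \;=\; \hat{f}(\ln \pp + \sigma \lambda) \;=\; \sum_{\qq \in \Frame(f)} (f_\qq\, \pp^\qq)\, e^{(\sigma^T \qq)\,\lambda},
\]
a signed exponential sum whose coefficients $f_\qq\,\pp^\qq$ share the sign of $f_\qq$ since $\pp^\qq > 0$. At $\lambda = 0$ its value is $f(\pp) > 0$ by hypothesis, and as $\lambda \to \infty$ it is dominated by those terms whose exponents lie on $\face(\sigma, \newton(f))$, with leading coefficient $\sum_{\qq \in \face(\sigma, \newton(f)) \cap \Frame(f)} f_\qq\, \pp^\qq$ and growth rate $\max_\qq \sigma^T \qq$.

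For (i) $\Rightarrow$ (ii), I would start from the positive vertex $\qq^* \in \Frame^+(f) \cap \vx(\newton(f))$ produced by the method, together with a normal $\nn$ strictly separating $\qq^*$ from $\Frame(f) \setminus \{\qq^*\}$ as in Lemma~\ref{lem:vertex-hyperplane}. The aim is to promote $\nn$ to a sign vector $\sigma \in \{-1,1\}^d$ for which $\qq^*$ remains the unique maximizer of $\sigma^T(\cdot)$ on $\Frame(f)$; whenever $\sigma = \sign(\nn)$ already works, $g_\sigma$ is eventually dominated by the positive term $f_{\qq^*} \pp^{\qq^*} e^{(\sigma^T \qq^*) \lambda}$, is positive at $\lambda = 0$, and one concludes positivity throughout $[0,\infty)$. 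Conversely, for (ii) $\Rightarrow$ (i), positivity of $g_\sigma$ on $[0,\infty)$ forces the leading coefficient along the ray to be nonnegative, hence $\face(\sigma, \newton(f))$ must contain at least one vertex of $\newton(f)$ with positive coefficient, which the method then detects via Lemma~\ref{lem:base-stropsat}.

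The main obstacle, and the reason this is stated as a conjecture rather than a theorem, is twofold. First, the normal cone of a vertex of $\newton(f)$ need not meet the set of sign vectors $\{-1,1\}^d$, so the naive projection $\nn \mapsto \sign(\nn)$ may move $\qq^*$ off the dominant face; one has to exploit the hypothesis $f(\pp) > 0$ to choose $\sigma$ adaptively, working with higher-dimensional faces of $\newton(f)$ and with the weighted coefficients $f_\qq\,\pp^\qq$ rather than the bare $f_\qq$, and invoking Theorem~\ref{th:admissible} to realise those faces via monotonic total preorders. Second, the universal quantifier \emph{for all} $\lambda \geq 0$ in (ii) is strictly stronger than asymptotic positivity: the signed exponential sum $g_\sigma$ may in principle exhibit intermediate sign changes between the positive endpoint $\lambda = 0$ and the positive asymptote, so ruling these out will likely require a Descartes-type bound on the number of positive roots of sparse exponential sums, or a convexity argument leveraging the monotonic preorder structure of Sect.~\ref{SE:subtropical}. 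I expect the latter to be the crux of the proof.
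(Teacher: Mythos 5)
There is an important point of reference here: the paper offers no proof of this statement at all. It is explicitly labelled a conjecture, it sits inside a draft-only block that is not even compiled in the submitted version, and statement (i) --- ``the method can discover that $f(p_1,\dots,p_d)>0$'' --- is never given a formal definition in the paper. So there is no paper proof to compare yours against, and your assessment that only (ii)$\Leftrightarrow$(iii) is immediate is consistent with the status of the statement. Your handling of (ii)$\Leftrightarrow$(iii) via the substitution $x_i=\ln p_i+\sigma_i\lambda$, so that $e^{x_i}=p_ie^{\sigma_i\lambda}$, is correct.

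The parts of your sketch that go beyond this, however, contain genuine gaps, only some of which you flag. For (ii)$\Rightarrow$(i) you argue that positivity of $g_\sigma$ on $[0,\infty)$ forces the leading coefficient $\sum_{\qq\in\face(\sigma,\newton(f))\cap\Frame(f)}f_\qq\,\pp^\qq$ to be nonnegative and hence the face to contain a vertex of $\newton(f)$ with positive coefficient, detectable by Lemma~\ref{lem:base-stropsat}. That step fails twice: the leading sum may vanish (so lower-order terms decide the asymptotic sign), and even when it is strictly positive it only yields some $\qq\in\Frame^+(f)$ lying \emph{on} the face, which need not be a vertex of $\newton(f)$ --- think of an edge whose two endpoints carry negative coefficients while an interior lattice point of the edge carries a large positive one; the method, which searches only for vertices in $\Frame^+(f)$, would not detect this. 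For (i)$\Rightarrow$(ii) there is, beyond the normal-cone versus sign-vector mismatch and the possible intermediate sign changes that you do acknowledge, a more basic mismatch you gloss over: in logarithmic coordinates the method's curve $a^\nn$ is a ray from the \emph{origin} in an arbitrary real direction $\nn$, whereas (ii) concerns rays emanating from $\ln\pp$ with direction restricted to $\{-1,1\}^d$, so ``promoting $\nn$ to $\sign(\nn)$'' is not obviously the right bridge even asymptotically, and the hypothesis $f(\pp)>0$ must enter in a way your sketch does not specify. In short, your proposal correctly settles the trivial equivalence and is candid about the rest, but it does not prove (i)$\Leftrightarrow$(ii); since the paper itself leaves the statement as an open conjecture, that equivalence remains unestablished on both sides.
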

\fi

\begin{corollary}
  Let $f\in\Z[x_1,\dots,x_d]$, and let $\pp\in\Frame(f)$. If $p=\max(\Frame(f))$
  or $p=\min(\Frame(f))$ with respect to an admissible term order in the sense
  of Gröbner Basis theory \cite{Buchberger:65a}, then $p\in\vx(\newton(f))$.\qed
\end{corollary}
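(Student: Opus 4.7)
The plan is to derive both assertions as direct consequences of Theorem~\ref{th:admissible}, by exhibiting, from an admissible term order, a monotonic total preorder on $\Z^d$ that either preserves or reverses extremes. Recall that an admissible term order in the Gröbner basis sense is, by definition, a total order on $\N^d$ that is well-founded, compatible with addition, and has $\zero$ as its least element. The well-foundedness plays no role for our purposes; what matters is totality and additive compatibility.

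First I would argue that any admissible term order $\leq$ on $\N^d$ extends uniquely to a monotonic total preorder $\preceq$ on $\Z^d$. Given $\xx,\yy\in\Z^d$, choose $\zz\in\N^d$ with $\xx+\zz,\yy+\zz\in\N^d$ (say, the componentwise maximum of $-\xx$, $-\yy$, and $\zero$) and set $\xx\preceq\yy$ iff $\xx+\zz\leq\yy+\zz$. Additive compatibility of $\leq$ makes this definition independent of the chosen $\zz$, and reflexivity, transitivity, totality, and monotonicity are inherited from $\leq$ in a routine check. (This is the $\Z^d$-analogue of the $\Q^d$-extension already presented in the excerpt.)

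For the max case the proof is then immediate: if $\pp=\max_\leq(\Frame(f))$, then also $\pp=\max_\preceq(\Frame(f))$, and the implication (ii)$\Rightarrow$(i) of Theorem~\ref{th:admissible} gives $\pp\in\vx(\newton(f))$. For the min case I would pass to the reverse preorder $\preceq'$ on $\Z^d$ defined by $\xx\preceq'\yy$ iff $\yy\preceq\xx$. Reflexivity, transitivity, and totality carry over trivially, and monotonicity survives reversal since $\yy\preceq\xx$ implies $\yy+\zz\preceq\xx+\zz$. Hence $\preceq'$ is itself a monotonic total preorder on $\Z^d$, and $\pp=\min_\preceq(\Frame(f))=\max_{\preceq'}(\Frame(f))$, so Theorem~\ref{th:admissible} again yields $\pp\in\vx(\newton(f))$.

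There is no real obstacle here; the only mildly delicate point is the extension of the order from the monoid $\N^d$ to the group $\Z^d$, which is a standard Grothendieck-style construction and whose well-definedness hinges precisely on the cancellation law that additive compatibility guarantees.
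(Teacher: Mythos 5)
Your proof is correct and matches the route the paper evidently intends: the corollary carries only a $\qed$ precisely because it is meant as a direct instance of Theorem~\ref{th:admissible}, with admissible term orders viewed as monotonic total (pre)orders and the minimum handled by passing to the reverse order. The one detail you fill in that the paper leaves implicit is the extension of the term order from the monoid $\N^d$ to the group $\Z^d$; your additive-translation (Grothendieck-style) construction does this correctly, and since $\Frame(f)\subset\N^d$ is finite and the original order is a \emph{total order} there, the required strict maximality $\pp=\max_\prec(\Frame(f))$ demanded by Theorem~\ref{th:admissible}(ii) is inherited without further argument.
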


It is one of our research goals to identify and characterize those polynomials
where the subtropical heuristic succeeds in finding positive points. We are now
going to give a necessary criterion. Let $f\in\Z[x_1,\dots,x_d]$, define
${\Pi(f)=\{\,\rr\in\mathopen]0,\infty\mathclose[^d\mid f(\rr)>0\,\}}$, and
denote by $\overline{\Pi(f)}$ its closure with respect to the natural topology.
In Lemma~\ref{lem:base-stropsat}, when $a$ tends to $\infty$, $a^{\nn}$ will
tend to some $\rr \in \{0, \infty\}^d$. If $\rr=\zero$, then
$\zero\in\overline{\Pi(f)}$. Otherwise, $\Pi(f)$ is unbounded. Consequently, for
the method to succeed, $\Pi$ must have at least one of those two properties.
Figure~\ref{fig:characterization-cases} illustrates four scenarios: the
subtropical method succeeds in the first three cases while it fails to find a
point in $\Pi(f)$ in the last one. The first sub-figure presents a case where
$\Pi(f)$ is unbounded. The second and third sub-figures illustrate cases where
the closure of $\Pi(f)$ contains $(0,0)$. In the fourth sub-figure where neither
$\Pi(f)$ is unbounded nor its closure contains $(0,0)$, the method cannot find
any positive value of the variables for $f$ to be positive.

\begin{figure}[t]
       \centering
       \begin{subfigure}{.4\columnwidth}
               \centering
               \includegraphics[width=\linewidth]{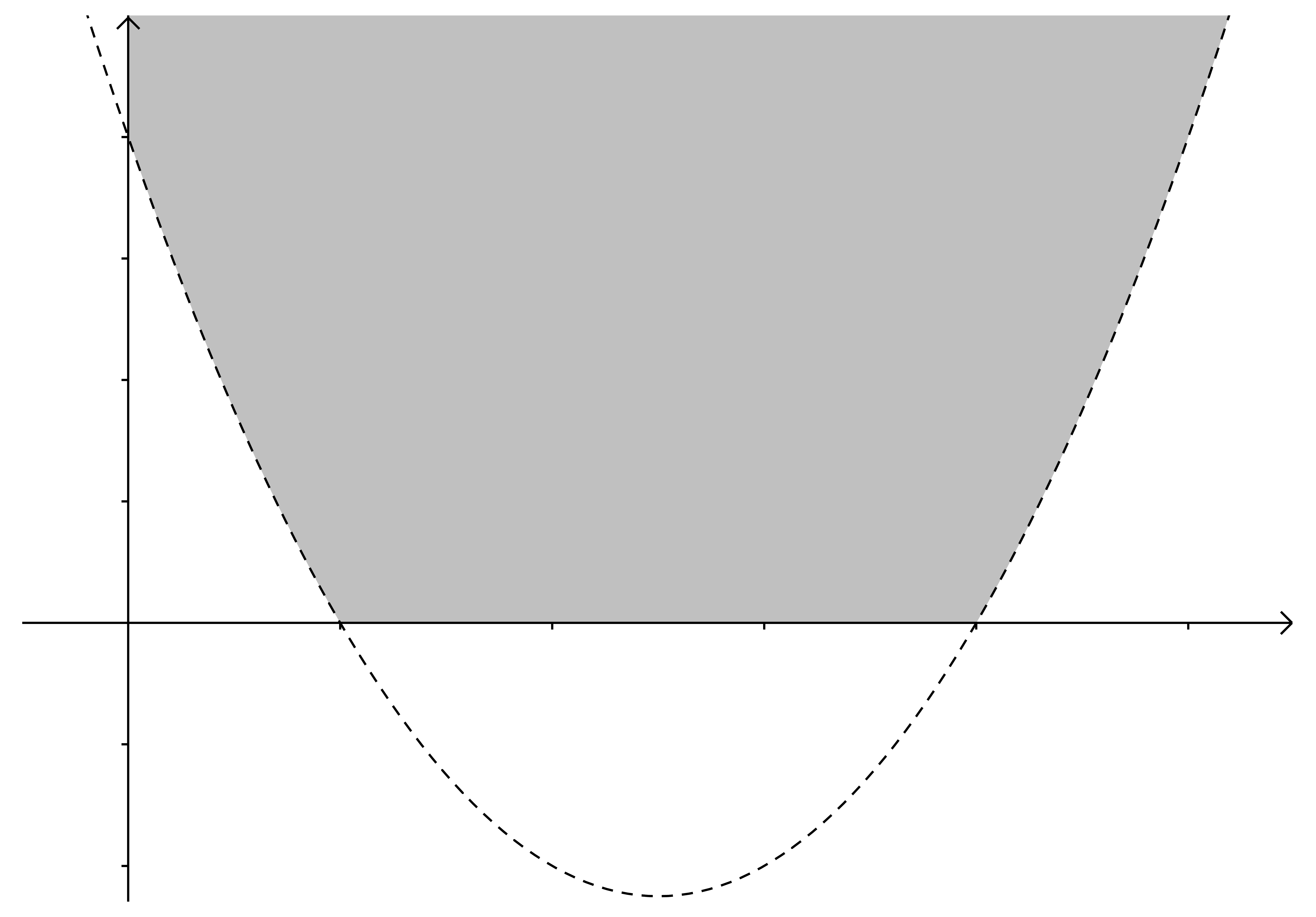}
               \caption{$f=y-x^2+5x-4$}
       \end{subfigure}%
       \quad\begin{subfigure}{.4\columnwidth}
               \centering
               \includegraphics[width=\linewidth]{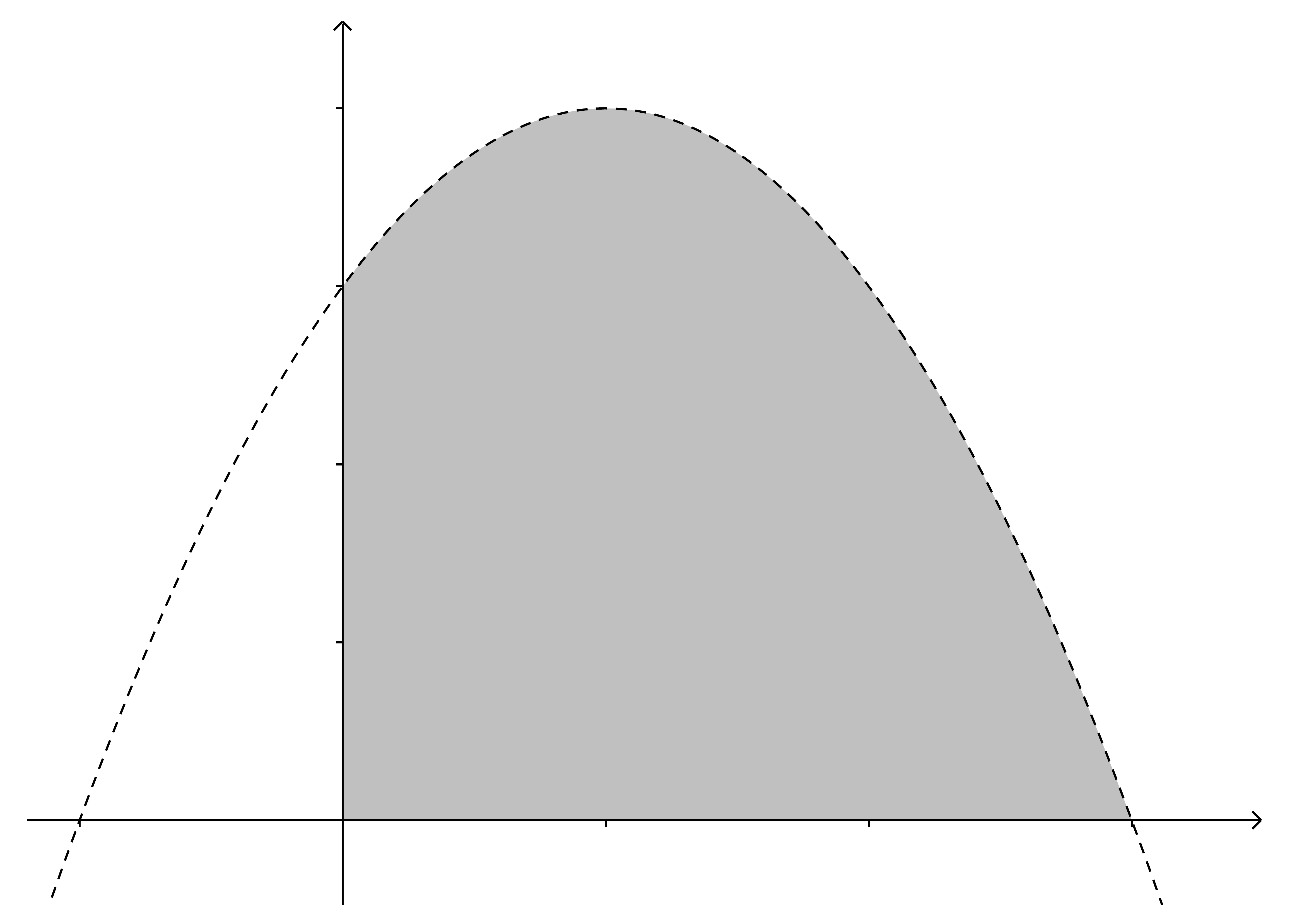}
               \caption{$f=-y-x^2+2x+3$}
       \end{subfigure}\\[1ex]
                \begin{subfigure}{.4\columnwidth}
                        \centering
                        \includegraphics[width=\linewidth]{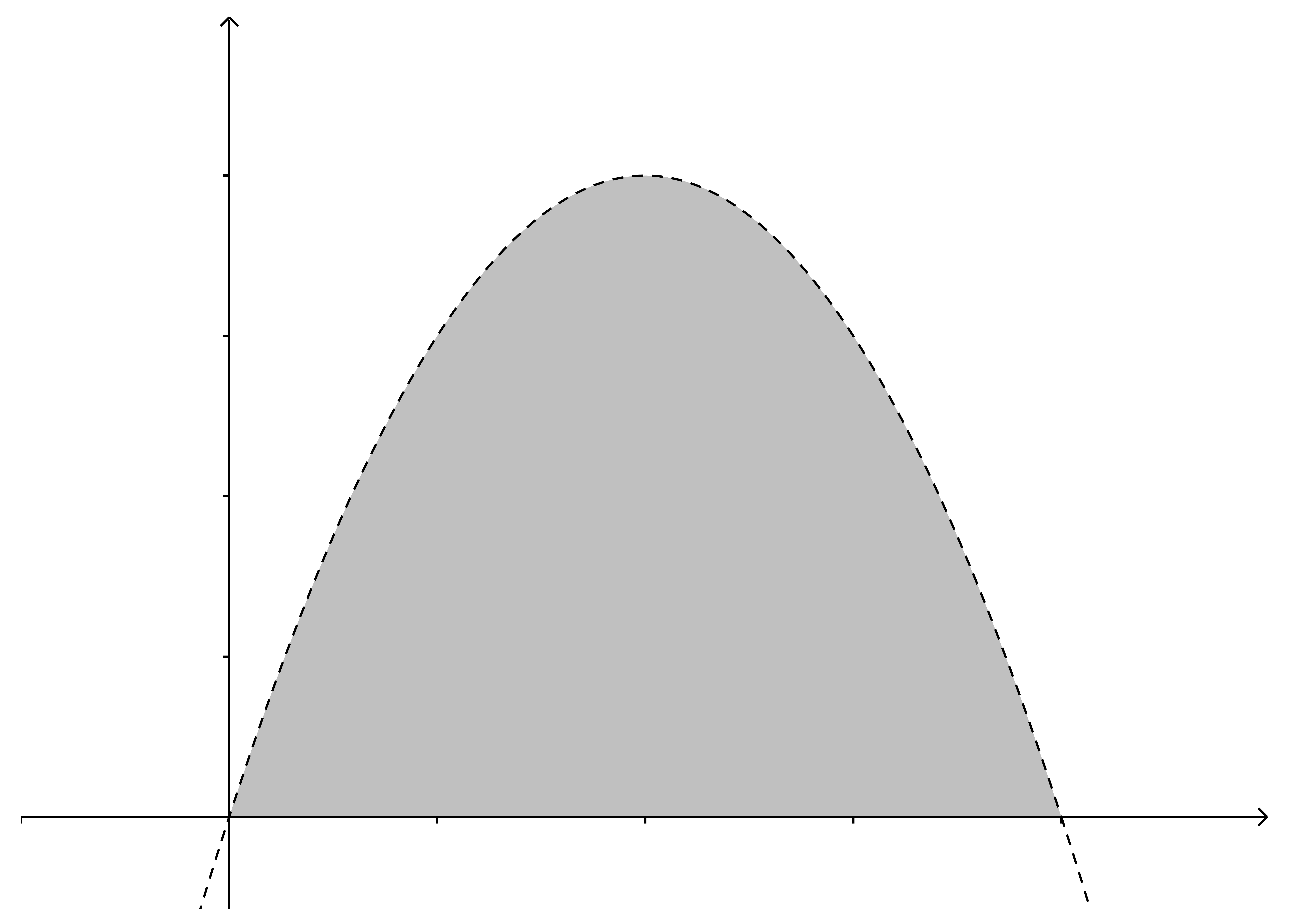}
                        \caption{$f=-y-x^2+4x$}
                \end{subfigure}
                \quad\begin{subfigure}{.4\columnwidth}
                        \centering
                        \includegraphics[width=\linewidth]{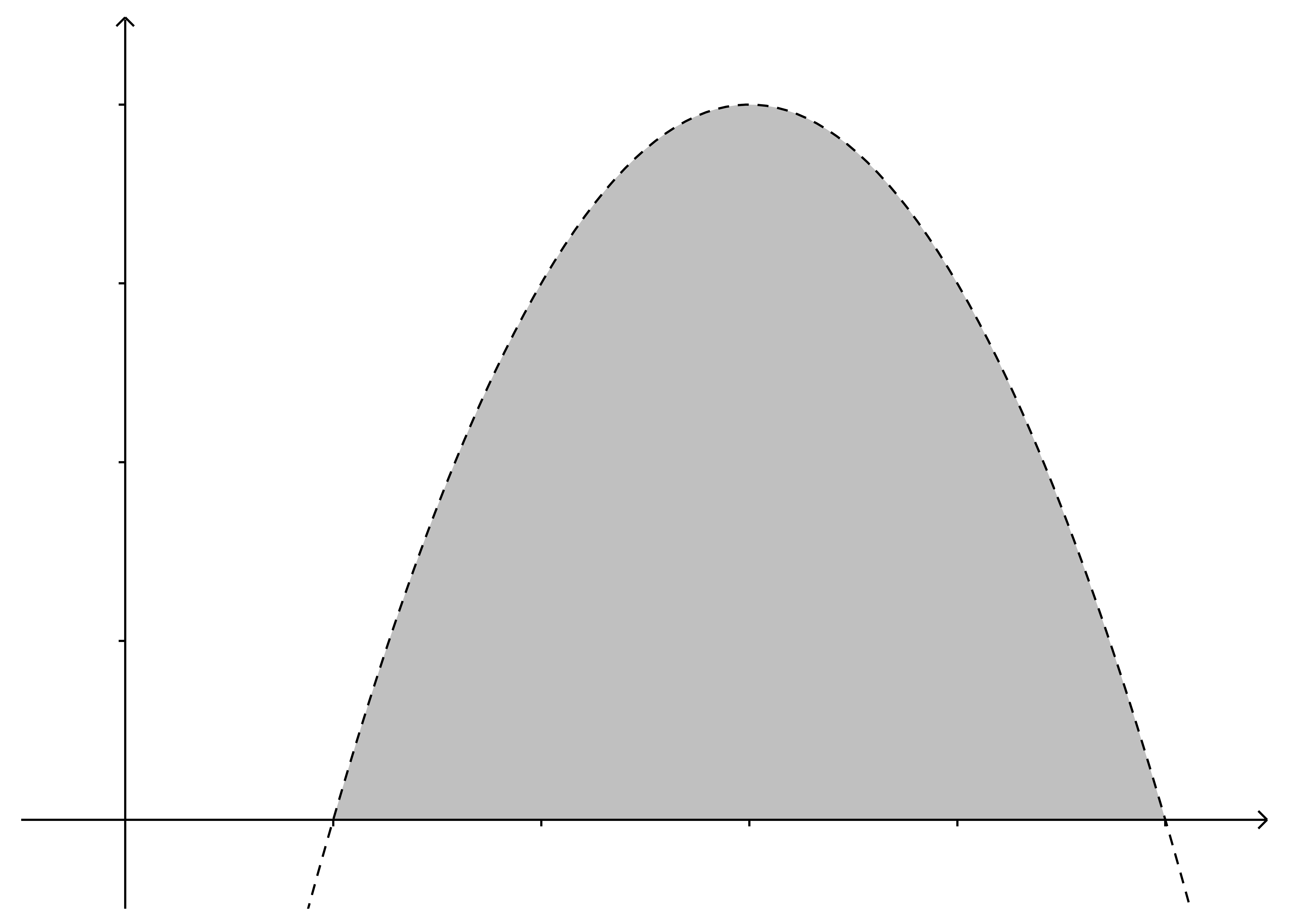}
                        \caption{$f=-y-x^2+6x-5$}
                \end{subfigure}
                \caption{Four scenarios of polynomials for the subtropical
                  method. The shaded regions show $\Pi(f)$.}
        \label{fig:characterization-cases}
\end{figure}

\section{Positive Values of Several Polynomials}\label{SE:multiplepol}

The subtropical method as presented in \cite{subtropical} finds zeros with all
positive coordinates of one single multivariate polynomial. This requires to
find a corresponding point with a positive value of the polynomial. In the
sequel we restrict ourselves to this sub-task. This will allow us generalize
from one polynomial to simultaneous positive values of finitely many
polynomials.

\subsection{A Sufficient Condition}

With a single polynomial, the existence of a positive vertex of the Newton
polytope guarantees the existence of positive real choices for the variables
with a positive value of that polynomial. For several polynomials we introduce a
more general notion: A sequence $(\pp_1, \dots, \pp_m)$ is a \emph{positive
  \vbasis} of $\{f_i\}_{i\in\{1,\dots, m\}}$ with respect to
$\nn \in \mathbb{R}^d$ if it is a \vbasis of
$\{\Frame(f_i)\}_{i \in \{1,\dots,m\}}$ with respect to $\nn$ and
$\pp_i \in \Frame^+(f_i)$ for all $i \in \{1, \dots, m\}$. The existence of a
positive \vbasis will guarantee the existence of positive real choices of the
variables such that all polynomials $f_1$, \dots,~$f_m$ are simultaneously
positive. The following lemma is a corresponding generalization of
Lemma~\ref{lem:base-stropsat}:
\begin{lemma}\label{lem:multiple-polys}
  If there exists a \vbasis $(\pp_1,\dots, \pp_m)$ of
  $\{\Frame(f_i)\}_{i \in \{1,\dots,m\}}$ with respect to
  $\nn \in \mathbb{R}^n$, then there exists $a_0 \in \mathbb{R^+}$ such that the
  following holds for all $a \in \mathbb{R}^+$ with $a \ge a_0$ and all
  $i \in \{1, \dots, m\}$:
\begin{enumerate}
\item
  $\abs{(f_i)_{\pp_i}\ a^{\nn^T\pp_i}} >
  \abs{\sum_{\qq \in \Frame(f_i)\setminus
      \{\pp_i\}}(f_i)_\qq\  a^{\nn^T\qq}}$,
\item $\sign(f_i(a^\nn)) = 
\sign((f_i)_{\pp_i})$.
\end{enumerate}
\end{lemma}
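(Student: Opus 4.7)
The plan is to reduce the statement to $m$ independent applications of Lemma~\ref{lem:base-stropsat}. By the definition of a vertex cluster given in Section~\ref{SE:basic}, for each $i\in\{1,\dots,m\}$ the point $\pp_i$ is a vertex of $\co(\Frame(f_i))=\newton(f_i)$ with respect to the \emph{same} vector $\nn$. So Lemma~\ref{lem:base-stropsat} applies to each $f_i$ individually and yields a threshold $a_0^{(i)}\in\R^+$ such that both conclusions (1) and (2) of that lemma hold for $f_i$ as soon as $a\ge a_0^{(i)}$.

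Next I would set $a_0=\max\{a_0^{(1)},\dots,a_0^{(m)}\}$. Since the index set is finite, this maximum lies in $\R^+$, and for any $a\ge a_0$ we simultaneously have $a\ge a_0^{(i)}$ for every $i$. Conclusion (1) of the lemma is then immediate from the per-polynomial inequalities supplied by Lemma~\ref{lem:base-stropsat}.

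For conclusion (2), I would simply note that it follows from (1) by a one-line sign argument: since $a>0$ implies $a^{\nn^T\pp_i}>0$, the term $(f_i)_{\pp_i}\,a^{\nn^T\pp_i}$ has the sign of $(f_i)_{\pp_i}$, and the strict domination in (1), combined with the triangle inequality applied to the remaining monomials of $f_i$ evaluated at $a^\nn$, forces $f_i(a^\nn)$ to inherit that same sign. In fact, this sign deduction is exactly the one used in the proof of Lemma~\ref{lem:base-stropsat}, so no new work is required.

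The proof has no real obstacle; it is essentially a routine uniformization of finitely many single-polynomial statements. The substantive content is entirely packaged into the \emph{definition} of a vertex cluster, which insists on a \emph{common} direction $\nn$ across all $m$ polynomials. It is precisely this sharing that permits a single scalar $a$, and hence a single moment curve $a^\nn$, to witness positivity of all $f_i$ at once; if different $\nn_i$ were allowed, there would be no common curve to evaluate on and the argument would break down.
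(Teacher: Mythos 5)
Your proof is correct and follows essentially the same route as the paper: apply the single-polynomial lemma to each $f_i$ with the common direction $\nn$ to obtain thresholds $a_{0,i}$, then take $a_0$ to be their maximum. The paper cites the original \cite[Lemma 4]{subtropical} rather than Lemma~\ref{lem:base-stropsat}, but these are interchangeable here, and your added remark deriving conclusion~(2) from~(1) is accurate though unnecessary since the cited lemma already supplies both.
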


\begin{proof}
From~\cite[Lemma 4]{subtropical}, for each $i\in\{1,\dots, m\}$, there exist 
$a_{0,i} \in \R^+$ such that for all $a \in \R^+$ with $a \geq a_{0,i}$ 
the following holds:
\begin{enumerate}
\item
  $\abs{(f_i)_{\pp_i}\ a^{\nn^T\pp_i}} >
  \abs{\sum_{\qq \in \Frame(f_i)\setminus
      \{\pp_i\}}(f_i)_{\qq}\ a^{\nn^T\qq}}$,
        \item $\sign(f_i(a^n)) = 
        \sign((f_i)_{\pp_i})$.
\end{enumerate}
It now suffices to take $a_0 = \max\{ a_{0,i}\ |\ 1 \leq i \leq m \}$.
\qed
\end{proof}
Similarly to the case of one polynomial, the following Proposition provides a
sufficient condition for the existence of a common point with positive value for
multiple polynomials.
\begin{proposition}
        \label{prop:multiple-inequalities}
                If there exists a positive 
        \vbasis $(\pp_1,\dots, \pp_m)$ of the polynomials
        $\{f_i\}_{i\in\{1,\dots, m\}}$ with respect to a vector $\nn \in 
        \mathbb{R}^d$,
        then there exists $a_0 \in \mathbb{R}^+$ such that for all $a \in 
        \mathbb{R}^+$ with $a \ge a_0$ the following holds:$$\bigwedge_{i=1}^m 
        f_i(a^\nn) > 0.$$
\end{proposition}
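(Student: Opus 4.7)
The plan is to derive the proposition as a direct consequence of Lemma~\ref{lem:multiple-polys} together with the extra positivity information that distinguishes a \emph{positive} vertex cluster from a plain vertex cluster. The essential observation is that the definition of a positive \vbasis packages two pieces of data: (a) $(\pp_1,\dots,\pp_m)$ is a \vbasis of $\{\Frame(f_i)\}_{i\in\{1,\dots,m\}}$ with respect to $\nn$, so Lemma~\ref{lem:multiple-polys} is applicable; and (b) each $\pp_i$ lies in $\Frame^+(f_i)$, meaning $(f_i)_{\pp_i}>0$.

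First I would invoke Lemma~\ref{lem:multiple-polys} on the underlying \vbasis $(\pp_1,\dots,\pp_m)$ of $\{\Frame(f_i)\}_{i\in\{1,\dots,m\}}$ with respect to $\nn$. This yields an $a_0\in\mathbb{R}^+$ such that, for every $a\ge a_0$ and every $i\in\{1,\dots,m\}$, the sign equality
\begin{displaymath}
  \sign(f_i(a^\nn))=\sign((f_i)_{\pp_i})
\end{displaymath}
holds. Then I would use the positivity hypothesis $\pp_i\in\Frame^+(f_i)$, which by definition means $(f_i)_{\pp_i}>0$, so $\sign((f_i)_{\pp_i})=1$. Combining the two, $f_i(a^\nn)>0$ for all $i\in\{1,\dots,m\}$ whenever $a\ge a_0$, which is exactly the desired conjunction $\bigwedge_{i=1}^m f_i(a^\nn)>0$.

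There is no real obstacle here; the work has already been done in the generalized dominance estimate of Lemma~\ref{lem:multiple-polys}, and the proposition essentially just rebrands that conclusion under the stronger hypothesis that singles out the positive frame. The only thing to be slightly careful about is to emphasize that a single $a_0$ (namely the one produced by the lemma, which is already the maximum over $i$ of the individual thresholds) serves uniformly for all $m$ polynomials, so that the conjunction is genuinely satisfied at the same argument $a^\nn$ rather than at $m$ different points.
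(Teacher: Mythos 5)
Your proposal is correct and matches the paper's own proof, which likewise reads the positivity $\pp_i\in\Frame^+(f_i)$ together with the sign equality from Lemma~\ref{lem:multiple-polys} to conclude $f_i(a^\nn)>0$ for all $i$ simultaneously. The remark about a single uniform $a_0$ is a nice clarification but is already implicit in the lemma's statement.
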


\begin{proof}
  For $i\in\{1,\dots,m\}$, since $\pp_i \in \Frame^+(f_i)$,
  Lemma~\ref{lem:multiple-polys} implies $f_i(a^\nn) > 0$.\qed
\end{proof}

\begin{example}
  Consider $f_1=2-xy^2z+x^2yz^3, f_2=3-xy^2z^4-x^2z-x^4y^3z^3,$ and
  $f_3 = 4 - z - y - x + 4$. The exponent vector $\tovec{0}$ is a vertex of
  $\text{newton}(f_1)$, $\newton(f_2)$, and $\newton(f_3)$ with respect to
  $(-1, -1, -1)$. Choose $a_0=2 \in \mathbb{R}^+$. Then for all
  $a \in \mathbb{R}$ with $a \ge a_0$ we have
  $f_1(a^{-1},a^{-1},a^{-1})>0 \land f_2(a^{-1},a^{-1},a^{-1}) > 0 \land
  f_3(a^{-1},a^{-1},a^{-1}) > 0$. \qed
\end{example}

\subsection{Existence of Positive Vertex Clusters}
Given polynomials $f_1$, \dots,~$f_m$,
Proposition~\ref{prop:multiple-inequalities} provides a sufficient condition,
i.e. the existence of a positive \vbasis of $\{f_i\}_{i\in\{1,\dots, m\}}$, for
the satisfiability of $\bigwedge_{i=1}^{m}f_i>0$. A straightforward method to
decide the existence of such a cluster is to verify whether each
$(\pp_1, \dots, \pp_m) \in \Frame^+(f_1) \times \cdots \times \Frame^+(f_m)$ is
a positive \vbasis by checking the satisfiability of the formula
\begin{displaymath}
\bigwedge\limits_{i \in \{1, \dots, m\}}\isvertex(\pp_i, \Frame(f_i), \nn,
c_i),
\end{displaymath}
where $\isvertex$ is defined as in (\ref{eq:isvertex}) on
p.\pageref{eq:isvertex}. This is a linear problem with $d+m$ variables $\nn$,
$c_1$, \dots,~$c_m$. Since $\Frame(f_1)$, \dots,~$\Frame(f_m)$ are finite,
checking all $m$-tuples $(\pp_1, \dots, \pp_m)$ will terminate, provided we rely
on a complete algorithm for linear programming, such as the Simplex
algorithm~\cite{simplex}, the ellipsoid method~\cite{ellipsoid}, or the interior
point method~\cite{interior-point-method}. This provides a decision procedure
for the existence of a positive \vbasis of $\{f_i\}_{i\in\{1,\dots, m\}}$.
However, this requires checking all candidates in
$\Frame^+(f_1) \times \cdots \times \Frame^+(f_m)$.

We propose
to use instead state-of-the-art SMT solving techniques over linear real
arithmetic to examine whether or not $\{f_i\}_{i\in\{1,\dots, m\}}$ has a
positive \vbasis with respect to some $\nn \in \mathbb{R}^d$. In the positive
case, a solution for $\bigwedge_{i=1}^{m}f_i>0$ can be constructed as $a^\nn$
with a sufficiently large $a \in \mathbb{R}^+$.

To start with, we provide a characterization for the positive frame of a single
polynomial to contain a vertex of the Newton polytope.
\begin{lemma}\label{lemma:gurobi-lra-eq}
  Let $f\in\Z[\xx]$. The following are equivalent:
  \begin{enumerate}[(i)]
  \item There exists a vertex $\pp \in \Frame^+(f)$ of
    $\newton(f)= \co(\Frame(f))$ with respect to $\nn\in \mathbb{R}^d$.
  \item There exists a vertex $\pp' \in \Frame^+(f)$ such that $\pp'$ is also a vertex of
    ${\co(\Frame^-(f) \cup \{\pp'\})}$ with respect to
    $\tovec{n'}\in \mathbb{R}^d$.
  \end{enumerate}
\end{lemma}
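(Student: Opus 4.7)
The plan is to establish both directions, with (i)$\Rightarrow$(ii) essentially immediate and the reverse requiring a small perturbation argument to handle ties. For (i)$\Rightarrow$(ii), I would take $\pp'=\pp$ and $\tovec{n'}=\nn$: since $\pp\in\Frame^+(f)$ is a vertex of $\newton(f)$ with respect to $\nn$, Lemma~\ref{lem:vertex-hyperplane} applied to the finite set $\Frame(f)$ gives $\nn^T\pp>\nn^T\qq$ for every $\qq\in\Frame(f)\setminus\{\pp\}$; in particular this holds for every $\qq\in\Frame^-(f)$, so $\pp'$ is a vertex of $\co(\Frame^-(f)\cup\{\pp'\})$ with respect to $\nn$ by the same lemma.

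For (ii)$\Rightarrow$(i), the hypothesis supplies $\tovec{n'}^T\pp'>\tovec{n'}^T\qq$ for all $\qq\in\Frame^-(f)$ but says nothing about how $\tovec{n'}$ ranks $\pp'$ against the other elements of $\Frame^+(f)$. The strategy is to slide from $\pp'$ to a maximizer of $\tovec{n'}^T\xx$ over the whole frame and then break ties by an infinitesimal perturbation. Put $M=\max_{\qq\in\Frame(f)}\tovec{n'}^T\qq$ and $F^\star=\{\,\qq\in\Frame(f)\mid\tovec{n'}^T\qq=M\,\}$. Because $\pp'\in\Frame^+(f)$ already forces $M\geq\tovec{n'}^T\pp'>\tovec{n'}^T\qq$ for every $\qq\in\Frame^-(f)$, we have $F^\star\subseteq\Frame^+(f)$. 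Now I would choose any vertex $\pp$ of $\co(F^\star)$, which exists because $F^\star$ is nonempty and finite (Theorem~\ref{th:admissible} with a lexicographic preorder exhibits one explicitly), and let $\tovec{m}\in\R^d$ be a witness with $\tovec{m}^T\pp>\tovec{m}^T\qq$ for all $\qq\in F^\star\setminus\{\pp\}$.

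The final step is to check that $\pp\in\Frame^+(f)$ is a vertex of $\newton(f)$ with respect to $\tovec{n'}+\varepsilon\tovec{m}$ for all sufficiently small $\varepsilon>0$. For $\qq\in F^\star\setminus\{\pp\}$ the $\tovec{n'}$-parts cancel and $\tovec{m}$ separates strictly, so the required strict inequality holds for every $\varepsilon>0$. For $\qq\in\Frame(f)\setminus F^\star$ one has $\tovec{n'}^T(\pp-\qq)>0$ with positive slack, hence $(\tovec{n'}+\varepsilon\tovec{m})^T(\pp-\qq)>0$ for every $\varepsilon$ below a positive threshold depending on $\qq$; finiteness of $\Frame(f)\setminus F^\star$ lets me take a common threshold. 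I expect the tie case $|F^\star|>1$ to be the only real obstacle, since then the pair from~(ii) cannot be reused directly; resolving it requires both Lemma~\ref{lem:vertex-hyperplane} (to re-express vertex-hood as a strict linear separation) and the existence of a vertex of $\co(F^\star)$ via Theorem~\ref{th:admissible}.
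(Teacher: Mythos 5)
Your (i)$\Rightarrow$(ii) direction matches the paper's almost verbatim (the paper appeals directly to the definition of vertex rather than routing through Lemma~\ref{lem:vertex-hyperplane}, but that is cosmetic). The (ii)$\Rightarrow$(i) direction, however, is a genuinely different argument, and both are correct. The paper argues by contradiction: it supposes $\vx(\newton(f))\subseteq\Frame^-(f)$, writes $\pp'$ as a convex combination of vertices (all in $\Frame^-(f)$), and observes that the vertex-hood of $\pp'$ w.r.t.\ $\nn'$ over $\Frame^-(f)\cup\{\pp'\}$ then forces $\nn'^T\pp'<\nn'^T\pp'$, a contradiction; hence some vertex of $\newton(f)$ lies in $\Frame^+(f)$. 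You instead give a constructive perturbation argument: you first locate the face of $\newton(f)$ maximizing $\nn'^T\xx$, show it is supported on $F^\star\subseteq\Frame^+(f)$, pick a vertex $\pp$ of $\co(F^\star)$ with a separating direction $\tovec{m}$, and check that $\tovec{n'}+\varepsilon\tovec{m}$ certifies $\pp$ as a vertex of the full Newton polytope for small $\varepsilon>0$. The paper's version is shorter, since it only needs to rule out $V\subseteq\Frame^-(f)$ without exhibiting a witness. Your version actually produces the vertex and its witnessing normal vector, which is more than the statement demands but gives a clean picture of where the vertex comes from; the only price is the extra bookkeeping around tie-breaking via $F^\star$, which you handle correctly with a standard $\varepsilon$-perturbation over the finite frame. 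One small remark: invoking Theorem~\ref{th:admissible} to produce a vertex of $\co(F^\star)$ is heavier machinery than needed; the existence of a vertex of the convex hull of a finite nonempty set (e.g.\ a lexicographic maximum) is elementary, which the paper itself relies on when it uses relation~(\ref{eq:1}).
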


\begin{proof}
  Assume (i). Take $\pp' = \pp$ and $\nn' = \nn$. Since $\pp$ is a vertex of
  $\text{newton}(f)$ with respect to $\nn$, $\nn^T\pp > \nn^T\pp_1$ for all
  $\pp_1 \in \Frame(f) \setminus \{\pp\}$. This implies that
  $\nn^T\pp > \nn^T\pp_1$ for all
  $\pp_1 \in \Frame^-(f) \setminus \{\pp\} = \left(\Frame^-(f) \cup
    \{\pp\}\right) \setminus \{\pp\}$. In other words, $\pp$ is a vertex of
  $\co(\Frame^-(f) \cup \{\pp\})$ with respect to $\nn$.

  Assume (ii). Suppose $V = \vx(\newton(f)) \subseteq \Frame^-(f)$. Then,
  $\pp' = \sum_{\Ss \in V}t_\Ss\Ss$ where $t_\Ss \in [0, 1]$,
  $\sum_{\Ss \in V}t_\Ss=1$. It follows that
  \[ \nn'^T\pp' = \sum_{\Ss \in V}t_\Ss\nn'^T\Ss < \sum_{\Ss \in
      V}t_\Ss\nn'^T\pp' = \nn'^T\pp' \sum_{\Ss \in V}t_\Ss = \nn'^T\pp',\] which
  is a contradiction. As a result, there must be some $\pp \in \Frame^+(f)$
  which is a vertex of $\newton(f)$ with respect to some $\nn \in \R^d$. \qed
\end{proof}
Thus some $\pp \in \Frame^+(f)$ is a vertex of the Newton polytope of a
polynomial $f$ if and only if the following formula is satisfiable:
\begin{align*}
\stropsat(f, \tovec{n'}, c)
&\deq \bigvee_{\pp \in \Frame^+(f)}\isvertex\left(\pp, 
\Frame^-(f)\cup \{\pp\}, \tovec{n'}, c\right)\\
&\equiv \bigvee_{\pp \in \Frame^+(f)}
\left[\tovec{n'}^T\pp+c>0 \land 
\bigwedge_{\qq \in \Frame^-(f)}\tovec{n'}^T\qq+c<0\right]\\
 &\equiv \left[\bigvee_{\pp \in \Frame^+(f)} 
 \tovec{n'}^T\pp+c>0\right] \land \left[\bigwedge_{\pp \in 
 \Frame^-(f)}\tovec{n'}^T\pp+c<0\right].
\end{align*}

For the case of several polynomials, the following theorem is a direct
consequence of Lemma~\ref{lemma:gurobi-lra-eq}.

\begin{theorem}
  \label{theorem:multiple-polys}
  Polynomials $\{f_i\}_{i\in\{1,\dots,m\}}$ have a positive \vbasis 
  with respect to $\nn \in \mathbb{R}^d$ if and only if 
  $\bigwedge_{i=1}^m\stropsat(f_i, \nn, c_i)$ is satisfiable. \qed
\end{theorem}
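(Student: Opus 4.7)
The plan is to reduce the multi-polynomial statement to the single-polynomial content of Lemma~\ref{lemma:gurobi-lra-eq}, using Lemma~\ref{lem:vertex-hyperplane} as the local bridge between the LRA-level predicate $\isvertex$ and geometric vertex-ness. Because $\stropsat(f_i,\nn,c_i)$ is a disjunction over $\Frame^+(f_i)$ of $\isvertex$-formulas, satisfiability unpacks into an existential choice of both a common direction $\nn$ and, for each $i$, a witness point $\pp_i \in \Frame^+(f_i)$, which is exactly the structure of a positive \vbasis.

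For the $(\Rightarrow)$ direction I start from a positive \vbasis $(\pp_1,\dots,\pp_m)$ with respect to $\nn$. Each $\pp_i\in\Frame^+(f_i)$ satisfies $\nn^T\pp_i>\nn^T\qq$ for every $\qq\in\Frame(f_i)\setminus\{\pp_i\}$, so Lemma~\ref{lem:vertex-hyperplane}~(i)$\Rightarrow$(ii), applied to $(\pp_i,\Frame(f_i),\nn)$, supplies a scalar $c_i$ with $\nn^T\pp_i+c_i>0$ and $\nn^T\qq+c_i<0$ for all $\qq\in\Frame(f_i)\setminus\{\pp_i\}$. Restricting the second inequality to $\Frame^-(f_i)\subseteq\Frame(f_i)\setminus\{\pp_i\}$ shows that the disjunct of $\stropsat(f_i,\nn,c_i)$ indexed by $\pp_i$ holds. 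Since the same $\nn$ works uniformly across $i$, the conjunction $\bigwedge_{i=1}^{m}\stropsat(f_i,\nn,c_i)$ is satisfiable with this common $\nn$ and the $c_i$'s just produced.

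For the $(\Leftarrow)$ direction I take $\nn$ and $c_1,\dots,c_m$ realising the conjunction. Unpacking $\stropsat$ yields, for each $i$, a point $\pp_i\in\Frame^+(f_i)$ with $\isvertex(\pp_i,\Frame^-(f_i)\cup\{\pp_i\},\nn,c_i)$ true; Lemma~\ref{lem:vertex-hyperplane}~(ii)$\Rightarrow$(i) then says $\pp_i$ is a vertex of $\co(\Frame^-(f_i)\cup\{\pp_i\})$ with respect to $\nn$, and Lemma~\ref{lemma:gurobi-lra-eq}~(ii)$\Rightarrow$(i) supplies, for each $i$, some $\pp_i^\star\in\Frame^+(f_i)\cap\vx(\newton(f_i))$. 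I plan to choose $\pp_i^\star$ as an $\nn$-maximizer of $\Frame^+(f_i)$, which already gives $\nn^T\pp_i^\star>\nn^T\qq$ for every $\qq\in\Frame^-(f_i)$.

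The hard part is producing a single direction that works simultaneously for all $i$: Lemma~\ref{lemma:gurobi-lra-eq} alone only provides per-polynomial vertices that may belong to different supporting directions, while a \vbasis demands a common $\nn$. I handle this by an infinitesimal perturbation $\nn^\star=\nn+\varepsilon\tovec{v}$, choosing $\tovec{v}$ generically to avoid the finitely many exceptional hyperplanes $\{\tovec{v}^T(\qq_1-\qq_2)=0\}$ arising from tied pairs in $\bigcup_i\Frame(f_i)$. For small enough $\varepsilon>0$ the strict inequalities against $\Frame^-(f_i)$ are preserved, and the perturbation distinguishes the tied elements inside each $\Frame^+(f_i)$; hence $\pp_i^\star$ becomes the unique $\nn^\star$-maximum of $\Frame(f_i)$ and thus a vertex of $\newton(f_i)$ with respect to the common direction $\nn^\star$, yielding the required positive \vbasis.
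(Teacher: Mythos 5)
Your proof is correct. The paper does not write out a proof of this theorem at all --- it is simply declared a ``direct consequence'' of Lemma~\ref{lemma:gurobi-lra-eq} --- so there is no paper argument to compare against; but you are right to flag the common-direction requirement as ``the hard part,'' because that is exactly what Lemma~\ref{lemma:gurobi-lra-eq} does not supply. Applied polynomial by polynomial, direction~(ii)$\Rightarrow$(i) of that lemma produces, for each $i$, a positive vertex of $\newton(f_i)$ with respect to \emph{some} direction, and those directions can differ across $i$; a vertex cluster needs one $\nn^\star$ serving all $i$ at once. Your perturbation $\nn^\star=\nn+\varepsilon\tovec{v}$ with $\tovec{v}$ generic and $\varepsilon$ small is the right device, and it rests on the observation you state almost in passing: when $\stropsat(f_i,\nn,c_i)$ holds, $-c_i$ strictly separates some positive exponent from \emph{all} negative exponents of $f_i$, so every $\nn$-maximal element of $\Frame(f_i)$ already lies in $\Frame^+(f_i)$; whichever of the tied elements the generic tie-break promotes to a unique maximum is therefore automatically a positive-frame vertex. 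The forward direction you handle exactly as expected, via Lemma~\ref{lem:vertex-hyperplane} and the restriction of the separating inequalities to $\Frame^-(f_i)$.

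Two small points on the backward direction. First, the invocation of Lemma~\ref{lemma:gurobi-lra-eq}~(ii)$\Rightarrow$(i) is dead weight: you call on it to obtain some $\pp_i^\star\in\Frame^+(f_i)\cap\vx(\newton(f_i))$, but immediately discard that output and define $\pp_i^\star$ differently (as an $\nn$-maximizer). Once you argue directly from the $\stropsat$ inequalities, the perturbation argument is self-contained and Lemma~\ref{lemma:gurobi-lra-eq} is not needed at all here. Second, as written you fix $\pp_i^\star$ first and then assert that ``$\pp_i^\star$ becomes the unique $\nn^\star$-maximum,'' but a generic $\tovec{v}$ may well promote a \emph{different} element of the tie set. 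This is harmless --- every tied element lies in $\Frame^+(f_i)$ --- but the wording should let the perturbation choose the witness rather than committing to $\pp_i^\star$ beforehand.
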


The formula $\bigwedge_{i=1}^m\stropsat(f_i, \nn, c_i)$ can be checked for
satisfiability using combinations of linear programming techniques and DPLL($T$)
procedures~\cite{Dutertre2006,Ganzinger2004}, i.e., satisfiability modulo linear
arithmetic on reals. Any SMT solver supporting the QF\_LRA logic is suitable. In
the satisfiable case $\{f_i\}_{i\in\{1,\dots,m\}}$ has a positive \vbasis and we
can construct a solution for $\bigwedge_{i=1}^mf_i>0$ as discussed earlier.

\begin{example}\label{eg:final}
Consider 
$f_1 = -12 + 2x^{12}y^{25}z^{49}-31x^{13}y^{22}z^{110}-11x^{1000}y^{500}z^{89}$ 
and $f_2 = -23+5xy^{22}z^{110}-21x^{15}y^{20}z^{1000} + 2x^{100}y^{2}z^{49}$. 
With $\nn = (n_1, n_2, n_3)$ this yields
\begin{align*}
\stropsat(f_1, \nn, c_1) &= 12n_1+25n_2+49n_3+c_1>0 \land 
13n_1+22n_2+110n_3+c_1<0 \\ &\land 1000n_1+500n_2+89n_3+c_1<0 \land c_1 < 
0,\\
\stropsat(f_2, \nn, c_2) &= (n_1+22n_2+110n_3+c_2>0 \lor 
100n_1+2n_2+49n_3+c_2>0) \\ &\land 15n_1+20n_2+1000n_3+c_2<0 \land c_2 < 0.
\end{align*}
The conjunction $\stropsat(f_1, \nn, c_1)\land\stropsat(f_2, \nn, c_2)$ is
satisfiable. The SMT solver CVC4 computes
$\nn=(-\frac{238834}{120461}, \frac{2672460}{1325071},-\frac{368561}{1325071})$
and $c_1 = c_2 = -1$ as a model. Theorem~\ref{theorem:multiple-polys} and
Proposition~\ref{prop:multiple-inequalities} guarantee that there exists a large
enough $a \in \R^+$ such that $f_1(a^\nn) > 0 \land f_2(a^\nn) > 0$. Indeed,
$a=2$ already yields $f_1(a^\nn) \approx 16371.99$ and
$f_2(a^\nn) \approx 17707.27$. \qed
\end{example}

\section{More General Solutions}\label{SE:gensol}
%
%
%
%
\label{sec:sign-generalize}
So far all variables were assumed to be strictly positive, i.e., only solutions
$\xx \in\mathopen]0, \infty\mathclose[^d$ were considered. This section proposes
a method for searching over $\R^d$ by encoding sign conditions along with the
condition in Theorem~\ref{theorem:multiple-polys} as a quantifier-free formula
over linear real arithmetic.

Let $V = \{x_1, \dots, x_d\}$ be the set of variables. We define a \emph{sign
  variant} of $V$ as a function $\tau: V \mapsto V \cup \{-x \mid x \in V\}$
such that for each $x \in V$, $\tau(x) \in \{x, -x\}$.  We write $\tau(f)$ to
denote the substitution $f(\tau(x_1), \dots, \tau(x_d))$ of $\tau$ into a
polynomial $f$. Furthermore, $\tau(a)$ denotes $\bigl(\frac{\tau(x_1)}{x_1}a,\dots,
\frac{\tau(x_d)}{x_d}a\bigr)$ for $a \in \R$.  A sequence $(\pp_1, \dots, \pp_m)$ is
a \emph{variant positive \vbasis} of $\{f_i\}_{i\in\{1,\dots,m\}}$ with respect
to a vector $\nn \in \R^d$ and a sign variant $\tau$ if $(\pp_1, \dots, \pp_m)$
is a positive \vbasis of $\{\tau(f_i)\}_{i\in\{1,\dots,m\}}$. Note that the
substitution of $\tau$ into a polynomial $f$ does not change the exponent
vectors in $f$ in terms of their exponents values, but only possibly changes
signs of monomials.  Given $\pp = (p_1, \dots, p_d) \in \N^d$ and a sign variant
$\tau$, we define a formula $\isnegated(\pp, \tau)$ such that it is $\true$ if 
and only if the sign of the monomial associated with $\pp$ is changed after 
applying 
the substitution defined by $\tau$: 
\begin{displaymath}
\isnegated(\pp, \tau) \deq \bigoplus\limits_{i=1}^d \bigl(\tau(x_i)=-x_i \land (p_i 
\bmod 2 = 1)\bigr).
\end{displaymath}
Note that this xor expression becomes $\true$ if and only if an odd number of
its operands are $\true$. Furthermore, a variable can change the sign of a
monomial only when its exponent in that monomial is odd. As a result, if
$\isnegated(\pp, \tau)$ is \tomath{true}, then applying the substitution defined
by $\tau$ will change the sign of the monomial associated with $\pp$. In
conclusion, some $\pp \in \Frame(f)$ is in the positive frame of $\tau(f)$ if
and only if one of the following mutually exclusive conditions holds:
\begin{enumerate}[(i)]
\item $\pp \in \Frame^+(f)$ and $\isnegated(\pp, \tau) = \false$
\item $\pp \in \Frame^-(f)$ and $\isnegated(\pp, \tau) = \true$.
\end{enumerate}
In other words, $\pp$ is in the positive frame of $\tau(f)$ if and only if the
formula
$\ispos(\pp, f, \tau) \deq \big(f_\pp>0 \land \neg\isnegated(\pp, \tau)\big)
\lor \big(f_\pp < 0 \land \isnegated(\pp, \tau)\big)$ holds. Then, the positive
and negative frames of $\tau(f)$ parameterized by $\tau$ are defined as
\begin{eqnarray*}
\Frame^+(\tau(f)) &=& \{\, \pp \in \Frame(f) \mid \ispos(\pp,f,\tau)\, \},\\
\Frame^-(\tau(f)) &=& \{\, \pp \in \Frame(f) \mid \neg\ispos(\pp,f,\tau)\, \},
\end{eqnarray*}
respectively.
The next lemma provides a sufficient condition for the existence of a solution
in $\R^d$ of $\bigwedge_{i=1}^mf_i>0$.

\begin{lemma}
        \label{lem:sign-variant}
 If there exists a variant positive \vbasis of 
 $\{f_i\}_{i\in\{1,\dots,m\}}$ with 
 respect to $\nn \in \R^d$ and a sign variant $\tau$, then 
 there exists $a_0 \in \R^+$ such that for all $a \in \R^+$ with $a \ge a_0$ 
 the 
 following holds:
 \begin{displaymath}
   \bigwedge_{i=1}^mf_i\big(\tau(a)^\nn\big)>0.
\end{displaymath}
\end{lemma}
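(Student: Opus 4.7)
The plan is to reduce the claim directly to Proposition~\ref{prop:multiple-inequalities} applied to the transformed polynomial system $\tau(f_1), \dots, \tau(f_m)$. By the definition of a variant positive \vbasis, the hypothesis is exactly that $(\pp_1, \dots, \pp_m)$ is a positive \vbasis of $\{\tau(f_i)\}_{i\in\{1,\dots,m\}}$ with respect to $\nn$. Hence Proposition~\ref{prop:multiple-inequalities} immediately furnishes an $a_0 \in \R^+$ with the property that $\tau(f_i)(a^\nn) > 0$ holds for every $i$ whenever $a \ge a_0$.

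What remains is the routine identification $\tau(f_i)(a^\nn) = f_i\bigl(\tau(a)^\nn\bigr)$. Writing $\epsilon_j = \tau(x_j)/x_j \in \{-1,+1\}$, the definition of the polynomial substitution gives
\begin{displaymath}
  \tau(f_i)(a^\nn) = f_i(\tau(x_1),\dots,\tau(x_d))\big|_{x_j=a^{n_j}} = f_i(\epsilon_1 a^{n_1},\dots,\epsilon_d a^{n_d}),
\end{displaymath}
while, componentwise, the vector $\tau(a)^\nn$ is read as $(\epsilon_1 a^{n_1},\dots,\epsilon_d a^{n_d})$, so that $f_i\bigl(\tau(a)^\nn\bigr)$ is exactly the right-hand side. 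Chaining the two equalities closes the proof.

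I do not foresee a genuine obstacle here: the lemma is essentially a bookkeeping corollary of Proposition~\ref{prop:multiple-inequalities}. The only point requiring a moment of care is spelling out the reading of $\tau(a)^\nn$ so that the sign $\epsilon_j$ sits outside the real-valued exponentiation $a^{n_j}$ — if one naively tried to interpret $(\epsilon_j a)^{n_j}$ with $\epsilon_j=-1$ and a non-rational $n_j$, the expression would be ill-defined. Making this convention explicit in one sentence is what makes the identity $\tau(f_i)(a^\nn)=f_i\bigl(\tau(a)^\nn\bigr)$ both meaningful and immediate, after which the proof is a one-line invocation of the previous proposition.
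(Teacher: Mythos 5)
Your proof is correct and follows the same route as the paper: both reduce the claim to Proposition~\ref{prop:multiple-inequalities} applied to the transformed system $\{\tau(f_i)\}$ and then observe the identity $\tau(f_i)(a^\nn)=f_i\bigl(\tau(a)^\nn\bigr)$. Your remark about reading $\tau(a)^\nn$ as $(\epsilon_1 a^{n_1},\dots,\epsilon_d a^{n_d})$ so that the sign stays outside the real exponentiation is a valid clarification of a convention the paper leaves implicit, and it is exactly what makes the closing identity well-defined for arbitrary real $\nn$.
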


\begin{proof}
Since  
$\{\tau(f_i)\}_{i\in\{1,\dots,m\}}$ has a positive \vbasis with respect 
to $\nn$, 
Proposition~\ref{prop:multiple-inequalities} guarantees that there exists $a_0 
\in \mathbb{R}$ such that for all $a \in \R$ with $a \ge a_0$, we have 
$\bigwedge_{i=1}^m\tau(f_i)(a^\nn)>0$, 
or $\bigwedge_{i=1}^mf_i\big(\tau(a)^\nn\big)>0$.
\qed
\end{proof}
A variant positive \vbasis exists if and only if there exist $\nn \in 
\R^d$, $c_1, \dots, c_m \in \R$, and a sign variant $\tau$ such that the 
following formula becomes $\true$:
 \begin{displaymath}
\generalstropsat(f_1, \dots, f_m, \nn, 
 c_1, \dots, c_m, \tau)
 \deq\bigwedge_{i=1}^m\stropsat\big(\tau(f_i),\nn,c_i\big),
 \end{displaymath}
where  for $i\in\{1,\dots, m\}$:
\begin{eqnarray*}
\stropsat\big(\tau(f_i),\nn,c_i\big) 
&\equiv &
\left[\bigvee_{\pp \in \Frame^+(\tau(f_i))}\hspace*{-3pt} \nn^T\pp+c_i>0\right] \land 
\left[\bigwedge_{\pp \in \Frame^-(\tau(f_i))}\hspace*{-3pt} \nn^T\pp+c_i<0 \right]\\
&\equiv & \phantom{\land\ }
\left[ \bigvee_{\pp \in \Frame(f_i)}
 \ispos(\pp,f_i, \tau) \land \nn^T\pp+c_i>0\right]\\ 
& & \land\ \left[
    \bigwedge_{\pp \in \Frame(f_i)} \ispos(\pp,f_i, \tau) \lor
    \nn^T\pp+c_i<0 \right].
\end{eqnarray*}

The sign variant $\tau$ can be encoded as $d$ Boolean variables $b_1$,
\dots,~$b_d$ such that $b_i$ is $\true$ if and only if $\tau(x_i) = -x_i$ for
all $i \in \{1, \dots, d\}$. Then, the formula
$\generalstropsat(f_1,\dots,f_m,\nn,c_1,\dots,c_m,\tau)$ can be checked for
satisfiability using an SMT solver for quantifier-free logic with linear real
arithmetic.

\section{Application to SMT Benchmarks}\label{SE:experiments}
A library STROPSAT implementing Subtropical Satisfiability, is available on our
web page\footnote{\url{http://www.jaist.ac.jp/~s1520002/STROPSAT/}}. It is
integrated into veriT~\cite{veriT} as an incomplete theory solver for non-linear
arithmetic benchmarks. We experimented on the QF\_NRA category of the SMT-LIB on
all benchmarks consisting of only inequalities, that is $4917$ formulas out of
$11601$ in the whole category. The experiments thus focus on those $4917$
benchmarks, comprising $3265$ \tomath{sat}-annotated ones, $106$
\tomath{unknown}s, and $1546$ \tomath{unsat} benchmarks. We used the SMT solver
CVC4 to handle the generated linear real arithmetic formulas
$\generalstropsat(f_1,\dots,f_m,\nn,c_1,\dots,c_m,\tau)$, and we ran veriT (with
STROPSAT as the theory solver) against the clear winner of the SMT-COMP 2016 on
the QF\_NRA category, i.e., Z3 (implementing nlsat~\cite{nlsat}), on a CX250
Cluster with Intel Xeon E5-2680v2 2.80GHz CPUs. Each pair of benchmark and
solver was run on one CPU with a timeout of 2500 seconds and 20 GB memory.
\tung{The experimental data and the library are also available on
  Zenodo\footnote{\url{http://doi.org/10.5281/zenodo.817615}}.}

Since our method focuses on showing satisfiability,
only brief statistics on \tomath{unsat} benchmarks are provided. Among the
$1546$ \tomath{unsat} benchmarks, 200 benchmarks are found unsatisfiable already
by the linear arithmetic theory reasoning in veriT. For each of the remaining
ones, the method quickly returns \tomath{unknown} within 0.002 to 0.096 seconds,
with a total cumulative time of 18.45 seconds (0.014 seconds on average). This
clearly shows that the method can be applied with a very small overhead, upfront
of another, complete or less incomplete procedure to check for unsatisfiability.

Table~\ref{tab:exp-result} provides the experimental results on benchmarks with
\tomath{sat} or \tomath{unknown} status, and the cumulative times. The
meti-tarski family consists of small benchmarks (most of them contain 3 to 4
variables and 1 to 23 polynomials with degrees between 1 and 4). Those are proof
obligations extracted from the MetiTarski project~\cite{Akbarpour2010}, where
the polynomials represent approximations of elementary real functions; all of
them have defined statuses. The zankl family consists of large benchmarks (large
numbers of variables and polynomials but small degrees) stemming from
termination proofs for term-rewriting systems~\cite{Fuhs2007}.
\begin{table}[hbt]
        \centering
        \caption{Comparison between STROPSAT and Z3 (times in seconds)}
        \label{tab:exp-result}
        \begin{tabular}{@{}l| r| r| r |r |r |r|r|r@{}}
                \hline
                \multirow{2}{*}{Family}& \multicolumn{4}{c|}{STROPSAT} & 
                \multicolumn{4}{c}{Z3} 
\\              \cline{2-9}
                & \tomath{sat} & Time & \tomath{unkown} & Time & 
                \tomath{sat} & Time & \tomath{unsat} & Time
                \\\hline
                meti-tarski (\tomath{sat} - 3220)& 2359 & 32.37 & 861 & 
                10.22 &  \bf 3220 & 88.55 &0 &0
                \\\hline
                zankl (\tomath{sat} - 45) & 29 & 3.77 & 16 & 0.59 & \bf 
                42 & 2974.35 &0 &0 \\\hline
                zankl (\tomath{unknown} - 106) & \bf 15 & 2859.44 & 76 & 
                6291.33 &
                14 & 1713.16  &23&1.06\\\hline
        \end{tabular}
        
\end{table}

        

Although Z3 clearly outperforms STROPSAT in the number of solved benchmarks, the
results also clearly show that our method is a useful complementing heuristic
with little drawback, to be used either upfront or in portfolio with other
approaches. As already said, it returns \tomath{unknown} quickly on
\tomath{unsat} benchmarks. In particular, on all benchmarks solved by Z3 only,
STROPSAT returns \tomath{unknown} quickly (see
Fig.~\ref{fig:ex-unknown-sat_unsat_timedout}).

When both solvers can solve the same benchmark, the running time of STROPSAT is
comparable with Z3 (Fig.~\ref{fig:ex-sat-timedout}). There are $11$ large
benchmarks ($9$ of them have the \tomath{unknown} status) that are solved by
STROPSAT but time out with Z3. STROPSAT times out for only $15$ problems, on
which Z3 times out as well. STROPSAT provides a model for $15$ \tomath{unknown}
benchmarks, whereas Z3 times out on 9 of them. The virtual best solver (i.e.\
running Z3 and STROPSAT in parallel and using the quickest answer) decreases the
execution time for the meti-tarski problems to 54.43 seconds, solves all
satisfiable zankl problems in 1120 seconds, and 24 of the unknown ones in 4502
seconds.

Since the exponents of the polynomials become coefficients in the linear
formulas, high degrees do not hurt our method significantly. As the SMT-LIB does
not currently contain any inequality benchmarks with high degrees, our
experimental results above do not demonstrate this claim. However, formulas like
in Example~\ref{eg:final} are totally within reach of our method (STROPSAT
returned \tomath{sat} within a second) while Z3 runs out of memory ($20$ GB)
after 30 seconds for the constraint $f_1>0 \land f_2 > 0$.

\begin{figure}[t]
        \centering
    \input{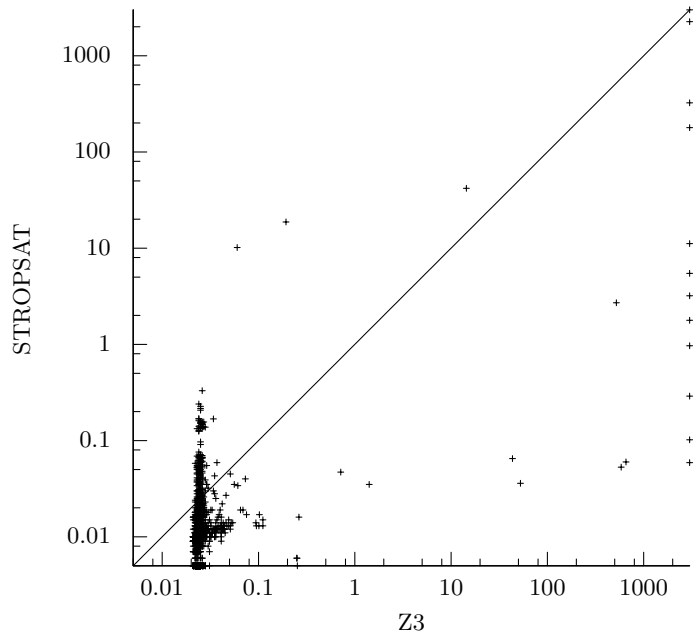}
        \caption{ STROPSAT returns \tomath{sat} or timeout ($2418$ benchmarks, times in seconds)}
        \label{fig:ex-sat-timedout}
\end{figure}

\begin{figure}[t]
        \centering
    \input{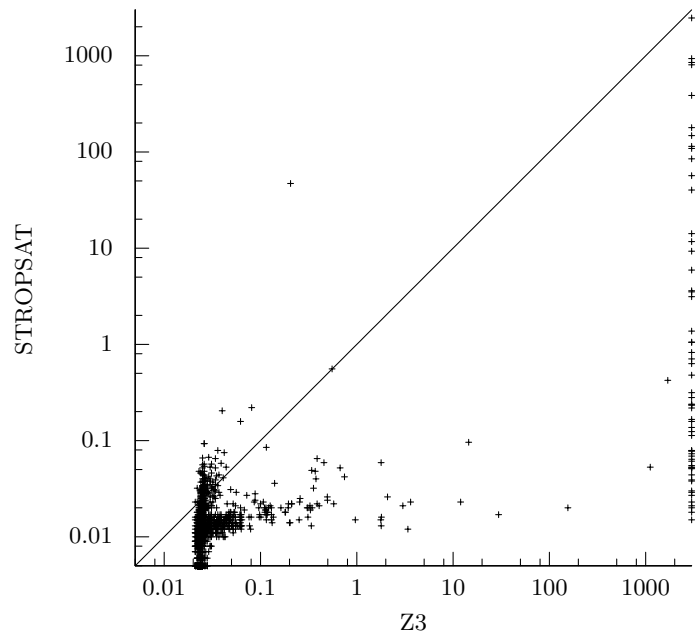}
    \caption{STROPSAT returns \tomath{unknown} ($2299$ benchmarks, times in seconds)}
    \label{fig:ex-unknown-sat_unsat_timedout}
\end{figure}

\section{Conclusion}\label{SE:conclusions}

We presented some extensions of a heuristic method to find simultaneous positive
values of nonlinear multivariate polynomials. Our techniques turn out useful to
handle SMT problems. In practice, our method is fast, either to succeed or to
fail, and it succeeds where state-of-the-art solvers do not. Therefore it
establishes a valuable heuristic to apply either before or in parallel with
other more complete methods to deal with non-linear constraints. Since the
heuristic translates a conjunction of non-linear constraints one to one into a
conjunction of linear constraints, it can easily be made incremental by using an
incremental linear solver.

To improve the completeness of the method, it could be helpful to not only
consider vertices of Newton polytopes, but also faces. Then, the value of the
coefficients and not only their sign would matter. Consider
$\{\pp_1, \pp_2, \pp_3\} = \face(\nn, \newton(f))$, then we have
$\nn^T\pp_1 = \nn^T\pp_2 = \nn^T\pp_3$. It is easy to see that
$f_{\pp_1}\xx^{\pp_1} + f_{\pp_2}\xx^{\pp_2} + f_{\pp_3}\xx^{\pp_3}$ will
dominate the other monomials in the direction of $\nn$. In other words, there
exists $a_0 \in \R$ such that for all $a \in \R$ with $a \ge a_0$,
$\sign(f(a^\nn)) = \sign(f_{\pp_1}+f_{\pp_2}+f_{\pp_3})$. We leave for future
work the encoding of the condition for the existence of such a face into linear
formulas.

In the last paragraph of Section~\ref{SE:subtropical}, we showed that, for the
subtropical method to succeed, the set of values for which the considered
polynomial is positive should either be unbounded, or should contain points
arbitrarily near $\zero$.  We believe there is a stronger, sufficient condition,
that would bring another insight to the subtropical method.

We leave for further work two interesting questions suggested by a reviewer,
both concerning the case when the method is not able to assert the
satisfiability of a set of literals.  First, the technique could indeed be used
to select, using the convex hull of the frame, some constraints most likely to
be part of an unsatisfiable set; this could be used to simplify the work of the
decision procedure to check unsatisfiability afterwards.  Second, a careful
analysis of the frame can provide information to remove some constraints in
order to have a provable satisfiable set of constraints; this could be of some
use for in a context of max-SMT.

Finally, on a more practical side, we would like to investigate the use of the
techniques presented here for the testing phase of the raSAT
loop~\cite{raSAT-ijcar2016}, an extension the interval constraint propagation
with testing and the Intermediate Value Theorem. We believe that this could lead
to significant improvements in the solver, where testing is currently random.


\section*{Acknowledgments}
We are grateful to the anonymous reviewers for their comments. This research has
been partially supported by the ANR/DFG project SMArT (ANR-13-IS02-0001 \& STU
483/2-1) and by the European Union project SC\textsuperscript{2}
(
grant agreement No.\ 712689).
The work has also received funding from the European Research Council under the
European Union's Horizon 2020 research and innovation program (grant agreement
No.\ 713999, Matryoshka). The last author would like to acknowledge the JAIST
Off-Campus Research Grant for fully supporting him during his stay at LORIA,
Nancy. The work has also been partially supported by the JSPS KAKENHI
Grant-in-Aid for Scientific Research(B) (15H02684) and the JSPS Core-to-Core
Program (A. Advanced Research Networks).

\bibliographystyle{splncs_srt}
\bibliography{generic}

\end{document}